\theoremstyle{plain}
\newtheorem{thm}{\protect\theoremname}
\theoremstyle{plain}
\newtheorem{prop}[thm]{\protect\propositionname}
\theoremstyle{remark}
\newtheorem*{rem}{\protect\remarkname}
\providecommand{\propositionname}{Proposition}
\providecommand{\remarkname}{Remark}
\providecommand{\theoremname}{Theorem}
\begin{document}

\twocolumn[

\aistatstitle{Bernoulli Race Particle Filters}

\aistatsauthor{Sebastian M Schmon \And  Arnaud Doucet \And George Deligiannidis }

\aistatsaddress{Department of Statistics\\University of Oxford \And Department of Statistics\\University of Oxford \And Department of Statistics\\University of Oxford} ]
\begin{abstract}
When the weights in a particle filter are not available analytically,
standard resampling methods cannot be employed. To circumvent this
problem state-of-the-art algorithms replace the true weights with
non-negative unbiased estimates. This algorithm is still valid but at the cost
of higher variance of the resulting filtering estimates in comparison to a particle filter
using the true weights.
We propose here a novel algorithm that allows for resampling according to the true
intractable weights when only an unbiased estimator of the weights is available.
We demonstrate our algorithm on several examples.
\end{abstract}
{\small{}Keywords: }Bernoulli factory, random weight particle filter,
unbiased estimation, Rao-Blackwellization

\begin{refsection}
\section{Introduction}

Over the last 25 years particle filters have become a standard tool for optimal estimation in the context of general state space hidden Markov models (HMMs) with applications in ever more complex scenarios. HMMs are described by a latent (unobserved) process $(X_t)_{t\in\mathbb{N}}$
taking values in a space $\mathsf{X}$ and evolving according to the state transition density
\[
X_{t}\mid(X_{t-1}=x)\sim f(\,\cdot\mid x)
\]
for $t\geq2$ and $X_{1}\sim\mu(\cdot)$. The states can be observed through
an observation process $(Y_{t})_{t\in\mathbb{N}}$, taking values in a space $\mathsf{Y}$ with observation
density
\[
Y_{t}\mid(X_{t}=x)\sim g(\,\cdot\mid x).
\]
Particle filters (PFs) sequentially approximate the joint distributions
\begin{align*}
\pi_{t}\left(x_{1:t}\right) & = p(x_{1:t} \mid y_{1:t}) \\
 & = \frac{g(y_1\mid x_1)\mu(x_1)\prod_{k=2}^{t}g(y_{k}\mid x_{k})f(x_{k}\mid x_{k-1})}{p(y_{1:t})}\\
 & = \frac{g(y_1\mid x_1)\mu(x_1)\prod_{k=2}^{t}\varpi\left(x_{k}\mid y_{k},x_{k-1}\right)}{p(y_{1:t})}
\end{align*}
on the space $\mathsf{X}^t$ at time $t$ by evolving a set of samples, termed \emph{particles}, through time. We use here the shorthand notation $x_{j:k} = (x_j,\ldots,x_k)$. The quantity $p(y_{1:t})$ denotes the marginal likelihood of the model
\begin{equation}\label{eq:marginal_likelihood}
	p(y_{1:t}) = \int g(y_1 \mid x_1) \mu(x_1)\prod_{k=2}^{t}\varpi\left(x_{k}\mid y_{k},x_{k-1}\right) dx_{1:t}
\end{equation} 
which is usually intractable.
The PF algorithm proceeds as follows. Given a set of particles $\{X_{t-1}^i, i=1, \ldots, N\}$ at time $t-1$
the particles are propagated through $q(x_t \mid x_{t-1}, y_t)$. To adjust
for the discrepancy between the distribution of the proposed states and $\pi_{t}\left(x_{1:t}\right)$ the particles
are weighted by
\begin{align}\label{eq:bf_weights}
w_t(x_{t-1}, x_t) & =\frac{\varpi\left(x_{t}\mid y_{t},x_{t-1}\right)}{q(x_t\mid x_{t-1}, y_t)}, \quad t\geq 2, \\\notag
w_1(x_1)     & =\frac{g(y_1 \mid x_1)\mu(x_1)}{q(x_1\mid y_1)}.
\end{align}
A subsequent resampling step according to these weights ensures that only promising particles survive.
Here we consider only multinomial resampling, where we write $I_{1:N} \sim \textsf{Mult}\left\{N; w_{1}, \ldots, w_{N}\right\}$ to denote the vector $I_{1:N}$ consisting of $N$ samples from a multinomial distribution with probabilities proportional to $w_1,\ldots, w_N$. 
The algorithm is summarized in \autoref{alg:particle_filter}.

\begin{algorithm}[ht]
\caption{Particle filter with $N$ particles}
\label{alg:particle_filter}
\begin{algorithmic}[1]
\Statex{\emph{At time} $t=1$}
\State{Sample 
$\widetilde{X}_1^i\sim q(\cdot\mid y_1), \quad i=1\ldots N$
}
\State{Compute weights}
\begin{equation*}
	w_{1,i} = w_t(\widetilde{X}_1^i), \quad i = 1,\ldots, N
\end{equation*}
\State{$I_{1:N} \sim \textsf{Mult}\left\{N; w_{1,1}, \ldots, w_{1,N}\right\}$}
\State{Set $X_1^i = \widetilde{X}_1^{I_i}, \quad i=1.\ldots, N$}
\Statex{\emph{At times} $t\geq 2$}
\State{Sample
$\widetilde{X}_t^i \sim q(\,\cdot \mid  X_{t-1}^i, y_t), \quad i = 1,\ldots, N$
}
\State{Compute weights as in \eqref{eq:bf_weights}}
\begin{equation*}
	w_{t,i} = w_t(X_{t-1}^i, \widetilde{X}_t^i), \quad i = 1,\ldots, N
\end{equation*}
\State{$I_{1:N} \sim \textsf{Mult}\{N; w_{t,1}, \ldots, w_{t,N}\}$}
\State{For $i =1,\ldots, N$ set
\begin{align*}
X_{1:t}^i & = \left(X_{1:(t-1)}^{I_i},\widetilde{X}_{t}^{I_i}\right).
\end{align*}
}
\end{algorithmic}
\end{algorithm}
In most applications, interest lies not in the distribution itself,
but rather in the expectations
\begin{equation}\label{eq:Expectation}
	\mathcal{I}(h) = \int h(x_{1:t}) \pi_{t}(x_{1:t})dx_{1:t}
\end{equation}
for some test function $h\colon\mathsf{X}^{t}\rightarrow\mathbb{R}$. 
Applying a PF we can estimate this integral,
for a set of particle genealogies $\left\{ X_{1:t}^{i},i=1,\ldots,N\right\} $,
by taking
\begin{equation}\label{eq:smc_est}
\widehat{\mathcal{I}}_{t,\mathrm{PF}}(h)=\frac{1}{N}\sum_{i=1}^{N}h\left(X_{1:t}^i\right).
\end{equation}
When the PF weights \eqref{eq:bf_weights} are not available analytically,
standard resampling routines---steps 3 and 7 in \autoref{alg:particle_filter}---cannot be performed. However, in many cases one might be able to construct an unbiased estimator of the resampling weights. \textcite{liu1998sequential,rousset2006discussion,DelMoral2007,fearnhead2008particle} show that using random but unbiased
non-negative weights still yields a valid algorithm. This can be easily seen by
considering a standard particle filter on an extended space. Yet,
this flexibility does not come without cost. Replacing the true
weights with a noisy estimate increases the variance for Monte
Carlo estimates of type \eqref{eq:smc_est}.

Here we introduce a new resampling scheme that allows for multinomial
resampling from the true intractable weights while just requiring access to non-negative unbiased estimates.
Thus, any PF estimate of type \eqref{eq:smc_est} will have the same variance as if the true weights
were known. This algorithm relies on an extension of recent work in
\textcite{dughmi2017bernoulli} where the authors consider an unrelated problem.

In the supplementary material we collect the proofs for Propositions 3, 4 and 5, Theorem 6 and additional simulation studies. Code to reproduce our results is available online\footnote{URL: \url{http://www.stats.ox.ac.uk/~schmon/}.}.

\section{Particle Filters with Intractable Weights}

Particle filters for state space models with intractable weights
rely on the observation that replacing the true weights
with a non-negative unbiased estimate is equivalent to a particle filter on an
extended space. In this section we introduce some examples of models
where the weights are indeed not available analytically and review
briefly how the random weight particle filter (RWPF) can be applied in these instances.

\subsection{Locally optimal proposal\label{subsec:loc_opt}}

Recall that in the setting of a PF for a state space model at
time $t-1$ the proposal for $t$ which leads to the minimum one step
variance, termed the \emph{locally optimal proposal} $q^{*}$, is
\[
q^{*}(x_{t}\mid x_{t-1},y_{t})=\frac{g(y_{t}\mid x_{t})f(x_{t}\mid x_{t-1})}{p(y_{t}\mid x_{t-1})},
\]
see, e.g. \textcite[Proposition 2]{doucet2000sequential}. Sampling from
the locally optimal proposal is usually straightforward using a rejection sampler. The weights
\begin{align}\notag
w_{t}(x_{t-1},x_{t}) & =p(y_{t}\mid x_{t-1}) \\\label{eq:local_opt_weight}
										 & =\int g(y_{t}\mid x_{t})f(x_{t}\mid x_{t-1})dx_{t},
\end{align}
however, are intractable if the integral on the right-hand side does not
have an analytical expression, thus prohibiting an exact implementation
of this algorithm. Our algorithm will enable us to resample according to these weights.

\subsection{Partially observed diffusions}

Most research on particle filters with intractable weights has been
carried out in the setting of partially observed diffusions, see
e.g. \textcite{fearnhead2008particle}, which we will describe here briefly.
For simplicity, consider the univariate diffusion\footnote{It is not necessary to restrict oneself to univariate diffusions,
see e.g. \textcite{fearnhead2008particle}. } process of the form
\begin{equation}
\label{eq:const_diff}
dX_{t}=a(X_{t})dt+dB_{t},\quad t\in[0,\mathcal{T}],
\end{equation}
where $a\colon\mathbb{R}\rightarrow\mathbb{R}$ denotes the drift
function and $\mathcal{T} \in \mathbb{R}$ is the time horizon. For a general diffusion constant
speed as assumed in (\ref{eq:const_diff}) can be obtained whenever the Lamperti transform is available. The diffusion is observed at discrete
times $t_i, i=1, \ldots T$ with measurement error described by the
density $g(y_{t_i} \mid x_{t_i})$.
In this model the resampling weights are often intractable since for most diffusion processes the transition
density $f_{\Delta t}(x_{t_i}\mid x_{t_{i-1}})$ over the interval with length
$\Delta t = t_i - t_{i-1}$ is not available analytically.
However, as shown in \textcite{rogers1985smooth,dacunha1986estimation,beskos2005exact}
the transition density over an interval of length $\Delta t$ can be expressed as
\begin{align} \notag
f_{\Delta t}(y\mid x) & = \varphi(y;x,\Delta t)\exp\left(A(y)-A(x)\right) \\\label{eq:florens-zmirou}
& \quad \times \mathbb{E}\left[\exp\left(-\int_{0}^{\Delta t}\phi(W_{s})ds\right)\right]
\end{align}
where $\varphi(\,\cdot\,; x, \Delta t)$ is the density of a Normal distribution with mean $x$ and variance $\Delta t$, $\left(W_{s}\right)_{s\in[0,\Delta t]}$ denotes a Brownian bridge
from $x$ to $y$ and $\phi(x)=\left(a^{2}(x)+a'(x)\right)/2$
for a function $A\colon\mathbb{R}\rightarrow\mathbb{R}$ with $A'(x)=a(x)$.
The expectation on the right-hand side in (\ref{eq:florens-zmirou})
is with respect to the Brownian bridge and is usually intractable.
Thus, for a particle filter to work in this example one either needs
to implement a Bootstrap PF, which can sample the diffusion
(\ref{eq:const_diff}) exactly \autocite[see, e.g.][]{beskos2005exact},
or one constructs an unbiased estimator of the expectation on the
right-hand side to employ the RWPF. Note
that for a function $g$ and $U\sim\mathrm{Unif}[0,t]$ we have
\[
\mathbb{E}\left[\frac{g(W_{U})}{\lambda}\mid W_{s},0\leq s\leq t\right]=\int_{0}^{t}\frac{g(W_{s})}{\lambda t}ds.
\]
Using this relationship we can use debiasing schemes such as the Poisson
estimator \autocite{Beskos2006} to find a non-negative unbiased estimator of the expectation of the
exponential.

\subsection{Random weight particle filter}

Here we briefly review the RWPF and compare its asymptotic variance to that of a PF with known weights.
Assume one does not have access to the weight $\varpi\left(x_{t}\mid y_{t},x_{t-1}\right)$ in \eqref{eq:bf_weights} but only
to some non-negative unbiased estimate $\hat{\varpi}(x_{t}\mid y_{t},x_{t-1},U_{t})$,
where $U_{t}$ are auxiliary variables sampled from some density $m$.
Then we carry out the multinomial resampling in \autoref{alg:particle_filter} by taking
$I_{1:N} \sim \textsf{Mult}\{N; \hat{w}_{t,1}, \ldots, \hat{w}_{t,N}\}$, where $\hat{w}_{t, k}$
is defined as in \eqref{eq:bf_weights} with $\varpi\left(x_{t}\mid y_{t},x_{t-1}\right)$ replaced by $\hat{\varpi}(x_{t}\mid y_{t},x_{t-1},U_{t})$.
This is equivalent to a standard particle filter on the extended space targeting at time $t$ the distribution
\[
\bar{\pi}_{t}\left(x_{1:t},u_{1:t}\right)\propto\prod_{k=1}^{t}\hat{\varpi}(x_{k}\mid y_{k},x_{k-1},u_{k})m\left(u_{k}\right)
\]
which satisfies
\begin{equation}
\bar{\pi}_{t}\left(x_{1:t},u_{1:t}\right)=\pi_{t}(x_{1:t})\bar{\pi}_{t}(u_{1:t}\mid x_{1:t})\label{eq:marginal_decomp}
\end{equation}
with
\begin{equation*}
\bar{\pi}_{t}(u_{1:t}\mid x_{1:t}) =\prod_{k=1}^{t}\frac{\hat{\varpi}(x_{k}\mid y_{k},x_{k-1},u_{k})m\left(u_{k}\right)}{\varpi\left(x_{k}\mid y_{k},x_{k-1}\right)}.
\end{equation*}

A PF targeting the sequence of distributions $\pi_{t}$ directly can be interpreted
as a Rao-Blackwellized PF \autocite{doucet2000sequential} of the PF on the extended space. Hence,
the following is a direct consequence of \autocite[][Theorem 3]{chopin2004central}.
\begin{prop}
\label{prop:smaller_variance}
For any sufficiently regular\footnote{We refer the reader to \textcite{chopin2004central} for the mathematical details.} real-valued test function $h$, the exact weight PF (EWPF)
and RWPF estimators of $\mathcal{I}(h)$ defined in (\ref{eq:Expectation}) both satisfy a $\sqrt N$-central limit theorem with asymptotic 
variances satisfying $\sigma_{\text{EWPF},h}^{2}\leq \sigma_{\text{RWPF},h}^{2}$.
\end{prop}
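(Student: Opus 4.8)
The plan is to recognize the RWPF as a standard multinomial-resampling particle filter on the extended space $\mathsf{X}^t\times\mathsf{U}^t$ targeting the sequence of distributions $\bar\pi_k$, to identify the EWPF with its Rao-Blackwellized (marginal) counterpart, and then to invoke \textcite[Theorems 1 and 3]{chopin2004central}.

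First I would verify that the EWPF is \emph{exactly} the marginal algorithm obtained from the extended filter by integrating out the auxiliary variables. At step $t\geq 2$ the extended proposal factorizes as $q(x_t\mid x_{t-1},y_t)\,m(u_t)$, whose $x_t$-marginal is $q(x_t\mid x_{t-1},y_t)$, while the extended incremental weight is $\hat\varpi(x_t\mid y_t,x_{t-1},u_t)/q(x_t\mid x_{t-1},y_t)$, whose conditional expectation given $(x_{t-1},x_t)$ equals $\mathbb{E}_m[\hat\varpi(x_t\mid y_t,x_{t-1},U_t)]/q(x_t\mid x_{t-1},y_t)=\varpi(x_t\mid y_t,x_{t-1})/q(x_t\mid x_{t-1},y_t)=w_t(x_{t-1},x_t)$, the exact weight (the time-$1$ term is handled identically). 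Together with \eqref{eq:marginal_decomp}, which shows that $\bar\pi_k$ marginalizes to $\pi_k$ for every $k$, this confirms that the EWPF is the marginal filter and, since $h$ depends only on $x_{1:t}$, that both algorithms estimate the same functional $\mathcal I(h)$.

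Next, since both algorithms are standard particle filters, each estimator satisfies a $\sqrt N$-central limit theorem under the regularity conditions of \textcite[Theorem 1]{chopin2004central}, with asymptotic variance equal to the usual sum over time steps $k=1,\dots,t$ of terms built from variances, under the successive (extended, resp.\ marginal) filtering distributions, of functions obtained from $h$ by conditioning on the first $k$ coordinates. I would then compare the two decompositions term by term: passing from the extended filter to the marginal filter replaces, at each step, the relevant function of $(x_{1:k},u_{1:k})$ by its conditional expectation given $x_{1:k}$, which by conditional Jensen (equivalently the law of total variance) can only decrease the variance; summing over $k$ yields $\sigma_{\text{EWPF},h}^{2}\leq\sigma_{\text{RWPF},h}^{2}$. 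This is precisely the content of \textcite[Theorem 3]{chopin2004central}, so it suffices to check that its hypotheses apply here.

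The main obstacle is therefore not the inequality itself, which is the routine Rao-Blackwellization argument, but the bookkeeping required to fit the present set-up into Chopin's framework: one has to check that the moment conditions of his CLT hold simultaneously for the extended and the marginal filter --- in particular that conditional unbiasedness and finite second moments of $\hat\varpi$ propagate to the quantities these conditions constrain --- and that his marginalization result, stated for a generic SMC sampler, transfers verbatim to this path-space HMM particle filter with its genealogical resampling step. Once these points are settled, the proposition follows directly.
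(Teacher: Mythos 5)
Your argument is exactly the paper's: the RWPF is a standard particle filter on the extended space targeting $\bar\pi_t$, the EWPF is its Rao--Blackwellized (marginal) counterpart via \eqref{eq:marginal_decomp}, and the variance ordering then follows from Theorem~3 of Chopin (2004). The additional verifications you supply (that the marginal proposal and the conditional expectation of the extended weight recover the exact weight $w_t$) are correct and simply make explicit what the paper leaves implicit.
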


\section{Bernoulli Races}

Assume now that the intractable weights can be written as follows
\begin{align} \notag
 & w_t(x_{t-1},x_{t})=\frac{g(y_{t}\mid x_{t})f\left(x_{t}\mid x_{t-1}\right)}{q\left(x_{t}\mid x_{t-1},  y_{t}\right)} \\ & =c\left(x_{t-1},x_{t}, y_t\right)b\left(x_{t-1},x_{t}, y_t\right),\label{eq:tract_replace}
\end{align}
where $0\leq b(x, x',y)\leq1$ for all $x,x', y\in\mathsf{X}$, and that we are
able to generate a coin flip $Z$ with $\mathbb{P}(Z=1)=b\left(x,x',y\right)$.
We will assume that $c(x,x',y)$ in (\ref{eq:tract_replace}) is analytically
available for any $x,x',y$. For brevity we will denote (\ref{eq:tract_replace})
as $w_{i}=c_{i}b_{i}$, for particles $i=1, \ldots,N$, dropping for now the dependence
on $t$.

The aim of this section is to develop an algorithm to perform multinomial sampling proportional
to the weights $w_i$, that is, sample from discrete
distributions of the form
\begin{equation}
p(i)=\frac{c_{i}b_{i}}{\sum_{k=1}^{N}c_{k}b_{k}},\quad i=1,\ldots N \label{eq:multinomial}
\end{equation}
where $c_{1},\ldots,c_{N}$ denote fixed constants whereas the $b_{1},\ldots,b_{N}$
are unknown probabilities. We assume we are able to sample coins
$Z_i\sim\mathrm{Ber}(b_i),i=1,\ldots,N.$ In practice, the $c_1, \ldots, c_N$ are selected to
ensure that $b_1, \ldots, b_N$ take values between 0 and 1. The Bernoulli race algorithm
for multinomial sampling proceeds by first proposing from the distribution
\begin{equation}
\mathbb{P}\left(I=i\right)=\frac{c_{i}}{\sum_{k=1}^{N}c_{k}}\label{eq:multi_prop}
\end{equation}
and then sampling $Z_{I}\sim \mathrm{Ber}(b_{I})$. If $Z_{{I}}=1$, return ${I}$,
otherwise the algorithm restarts. Pseudo code describing this procedure
is presented in Algorithm \ref{Ref:alg:bern_race}. If
we take $c_{j}=1$ for all $j=1,\ldots,N$ we recover the original
Bernoulli race algorithm from \textcite{dughmi2017bernoulli}.

\begin{algorithm}
\begin{algorithmic}[1]
\caption{Bernoulli race}
\label{Ref:alg:bern_race}
\State{Draw $I\sim \mathsf{Mult}\{1; c_1, \ldots, c_N\}$}
\State{Draw $Z \sim b_{{I}}$} \If {$Z=1$} 	
\Return{$I = I$}
\Else{ go back to line 1}
\EndIf
\end{algorithmic}
\end{algorithm}

\begin{prop}\label{prop:2}
Algorithm \ref{Ref:alg:bern_race} samples from the distribution
\[
p(i)=\mathbb{P}(I = i \mid Z_I = 1) = \frac{c_{i}b_{i}}{\sum_{k=1}^{N}c_{k}b_{k}}.
\]
\end{prop}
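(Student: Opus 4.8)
The plan is to view Algorithm~\ref{Ref:alg:bern_race} as a sequence of independent and identically distributed \emph{rounds}: in each round we draw a proposal $I\sim\mathsf{Mult}\{1;c_1,\ldots,c_N\}$ and then, independently, a coin $Z\sim\mathrm{Ber}(b_I)$, and the algorithm returns the proposal of the first round in which $Z=1$. Writing $C=\sum_{k=1}^{N}c_k$, a single round proposes $i$ and accepts with probability $(c_i/C)\,b_i$, so the per-round acceptance probability is $a:=\sum_{k=1}^{N}(c_k/C)\,b_k=C^{-1}\sum_{k=1}^{N}c_kb_k$.

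First I would make the round structure precise by introducing i.i.d.\ pairs $(I^{(n)},Z^{(n)})_{n\geq1}$ with the above law, setting $\tau=\inf\{n:Z^{(n)}=1\}$ and declaring the output to be $I^{(\tau)}$. Provided $a>0$ (equivalently $\sum_k c_k b_k>0$, which holds in the particle filter setting since at least one weight is positive), $\tau$ is geometric with parameter $a$, so the algorithm terminates almost surely and the output is well defined. Decomposing over the index $n$ of the first accepting round and using independence across rounds,
\[
\mathbb{P}(\text{output}=i)=\sum_{n=1}^{\infty}(1-a)^{n-1}\,\frac{c_ib_i}{C}=\frac{1}{a}\cdot\frac{c_ib_i}{C}=\frac{c_ib_i}{\sum_{k=1}^{N}c_kb_k},
\]
which is exactly the asserted distribution. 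Alternatively, one avoids summing the series: conditionally on the event that a fixed round accepts, its proposal has law $\mathbb{P}(I=i,\,Z=1)/\mathbb{P}(Z=1)=(c_ib_i/C)/a$, and since the rounds are exchangeable the returned value inherits this conditional law; this is just the standard rejection-sampling identity applied to the proposal $\mathbb{P}(I=i)=c_i/C$ with acceptance probability $b_i$.

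The argument is essentially routine; the only points needing care are the independence of the coin draws across successive restarts (so that the memoryless ``go back to line~1'' step is legitimate) and the almost-sure termination, which is where the assumption that the $c_i$ are chosen so that the $b_i\in[0,1]$ with $\sum_k c_k b_k>0$ enters. Once the i.i.d.\ round decomposition and the value of $a$ are in place, the computation of $\mathbb{P}(\text{output}=i)$ is immediate.
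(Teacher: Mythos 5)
Your proof is correct and, in its second formulation (conditioning a single round on acceptance via $\mathbb{P}(I=i,\,Z=1)/\mathbb{P}(Z=1)$), is exactly the argument the paper gives. Your primary version via the geometric series over i.i.d.\ rounds is an equivalent expansion of that same identity, with the added merit of making almost-sure termination and the independence across restarts explicit, which the paper leaves implicit.
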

\begin{proof}

Note that the probability of sampling $I=i$ and accepting is
\begin{equation*}
	\mathbb{P}(I = i, Z_i = 1) = \frac{b_i c_i}{\sum_k c_k}.
\end{equation*}
It follows that observing $Z_I = 1$ has probability $\mathbb{P}(Z_I=1) = \sum_k b_k c_k / \sum_k c_k$. Now for any $i=1,\ldots N$
\begin{align*}
p(i) & = \mathbb{P}(I = i \mid Z_I = 1) \\
	& =\frac{\mathbb{P}(I = i, Z_i = 1)}{\mathbb{P}(Z_I = 1)} \\
  & =\frac{b_{i}c_{i}}{\sum_{k}c_{k}}\Big/\frac{\sum_{k}b_{k}c_{k}}{\sum_{k}c_{k}} =\frac{b_{i}c_{i}}{\sum_{k}b_{k}c_{k}}.\qedhere
\end{align*}
\end{proof}

\subsection{Efficient Implementation}

This algorithm repeatedly proposes an a priori unknown amount of random
variables from the multinomial distribution (\ref{eq:multi_prop}).
Naïve implementations of multinomial sampling are of complexity $O(N)$
for one draw from the multinomial distribution. Standard resampling
algorithms, however, can sample $N$ random variables at cost $O(N)$ \autocite[see e.g.][]{hol2006resampling}
but in this case the number of samples needs to be known beforehand.
We show here that Bernoulli resampling can still be implemented with
an average of $O(N)$ operations.

To ensure that the Bernoulli resampling is fast we need cheap samples from the multinomial distribution with weights proportional to $c_1,\ldots,c_N$. We can achieve this
with the Alias method \autocite{walker1974new,walker1977efficient} which
requires $O(N)$ preprocessing after which we can sample with $O(1)$
complexity. Hence, the overall complexity depends on the number of
calls to the Bernoulli factory per sample. Denote $C_{j,N}$ the number
of coin flips that are required to accept a value for the $j$th sample, where $j=1,\ldots,N$. The random variable $C_{j,N}$ follows a geometric distribution with success probability
\begin{equation}
\rho_{N}= \mathbb{P}(Z_I = 1) = \frac{\sum_{k=1}^{N}c_{k}b_{k}}{\sum_{k=1}^{N}c_{k}}.\label{eq:geometric_success}
\end{equation}
The expected number of trials until a value is accepted is then
\[
\mathbb{E}\left[C_{j,N}\right]=\frac{1}{\rho_N}\quad\left(j=1,\ldots,N\right).
\]

The complexity of the resampling algorithm depends on the values of
the acceptance probabilities $b_{1},\ldots,b_{N}$. For example, if
all probabilities are identical, that is, $b_{1}=\ldots=b_{N}=b$, we
expect $N/b$ coin flips, each of cost $O(1)$, which
leads to overall order $O(N)$ complexity. In practice, however, the
success probabilities of our Bernoulli factories are all different.
Assuming all $b_{j}, j=1, \ldots, N$ are non-zero, the expected algorithmic complexity
per particle is bounded by the inverse of smallest and largest Bernoulli
probabilities. Let $\underline{b}=\min\{b_{1},\ldots,b_{N}\}$.
Then,
\[
\mathbb{E}[C_{j,N}]=\frac{\sum_{k}c_{k}}{\sum_{k}b_{k}c_{k}}\leq\frac{\sum_{k}c_{k}}{\underline{b}\sum_{k}c_{k}}=\frac{1}{\underline{b}}
\]
and with $\overline{b}=\max\{b_{1},\ldots,b_{m}\}$
\[
\mathbb{E}[C_{j,N}]\geq\frac{1}{\overline{b}}.
\]
In practice, the complexity of the Bernoulli sampling algorithm depends
on the behavior of $\rho_{N}$ as shown by the following central limit theorem.
\begin{prop}
\label{prop:clt1}Assume $\lim_{N\rightarrow\infty}\rho_{N}=:\rho\in(0,1)$.
Then we have the following central limit theorem for the average number
of coin flips as $N\rightarrow\infty$
\[
\sqrt{N}\left(\frac{1}{N}\sum_{j=1}^{N}C_{j,N}-\frac{1}{\rho_{N}}\right)\overset{d}{\rightarrow}\mathcal{N}\left(0,\frac{1-\rho}{\rho^{2}}\right),
\]
where $\overset{d}{\rightarrow}$ denotes convergence in distribution.
\end{prop}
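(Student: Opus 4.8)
The plan is to recognise this as a central limit theorem for a triangular array of row-wise i.i.d.\ summands and to invoke Lyapunov's CLT. First I would record the structural fact, already implicit in the discussion preceding the statement: for each fixed $N$ (treating the weights $c_i,b_i$, hence $\rho_N$, as given) the counts $C_{1,N},\ldots,C_{N,N}$ are independent and identically distributed, each geometric with success probability $\rho_N$. Indeed, producing the $j$th accepted index runs the Bernoulli race with a fresh batch of proposals from \eqref{eq:multi_prop} and fresh coin flips, so the number of trials needed is geometric and independent across $j$. Consequently $\mathbb{E}[C_{j,N}]=1/\rho_N$ and $\operatorname{Var}(C_{j,N})=(1-\rho_N)/\rho_N^{2}=:\sigma_N^{2}$, and the hypothesis $\rho_N\to\rho\in(0,1)$ gives $\sigma_N^{2}\to(1-\rho)/\rho^{2}=:\sigma^{2}>0$.

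Next I would set $S_N=\sum_{j=1}^{N}C_{j,N}$, so that the quantity of interest is $\sqrt N\big(\tfrac1N S_N-\tfrac1{\rho_N}\big)=(S_N-\mathbb{E}S_N)/\sqrt N$ with $\operatorname{Var}(S_N)=N\sigma_N^{2}$. To apply the Lyapunov CLT for triangular arrays it suffices to check the condition with exponent $\delta=1$, namely
\[
\frac{1}{(N\sigma_N^{2})^{3/2}}\sum_{j=1}^{N}\mathbb{E}\big|C_{j,N}-\mathbb{E}C_{j,N}\big|^{3}
=\frac{\mathbb{E}\big|C_{1,N}-1/\rho_N\big|^{3}}{\sqrt N\,\sigma_N^{3}}\longrightarrow 0 .
\]
The point that makes this work is that the third central moment of a geometric$(\rho_N)$ variable is a continuous function of $\rho_N$ that stays bounded as long as $\rho_N$ is bounded away from $0$; since $\rho_N\to\rho>0$ the numerator is $O(1)$, while $\sigma_N^{3}\to\sigma^{3}>0$ and the factor $\sqrt N$ in the denominator forces the ratio to $0$.

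Having verified Lyapunov's condition, the CLT yields $(S_N-\mathbb{E}S_N)/\sqrt{N\sigma_N^{2}}\overset{d}{\rightarrow}\mathcal{N}(0,1)$, i.e.\ $\sqrt N\big(\tfrac1N S_N-\tfrac1{\rho_N}\big)\big/\sigma_N\overset{d}{\rightarrow}\mathcal{N}(0,1)$. Since $\sigma_N\to\sigma$, Slutsky's theorem upgrades this to $\sqrt N\big(\tfrac1N S_N-\tfrac1{\rho_N}\big)\overset{d}{\rightarrow}\mathcal{N}(0,\sigma^{2})=\mathcal{N}\big(0,(1-\rho)/\rho^{2}\big)$, which is the claim. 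I do not anticipate a genuine obstacle: the only subtlety is that the summands' law depends on $N$, and that is exactly what the triangular-array (Lyapunov/Lindeberg) machinery handles, with the convergence $\rho_N\to\rho\in(0,1)$ supplying the uniform moment control needed. If one prefers not to quote Lyapunov's theorem, an equivalent route is to expand the characteristic function $\mathbb{E}\exp\!\big(\mathrm{i}t(C_{1,N}-1/\rho_N)/\sqrt N\big)=1-\sigma_N^{2}t^{2}/(2N)+o(1/N)$, with the remainder controlled uniformly in $N$ by the bounded third moment, and pass to the limit in the $N$th power to obtain $e^{-\sigma^{2}t^{2}/2}$.
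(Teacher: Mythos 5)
Your proof is correct and follows essentially the same route as the paper's: both treat $\{C_{j,N}\}_{j\le N}$ as a triangular array of row-wise i.i.d.\ geometric variables with mean $1/\rho_N$ and variance $(1-\rho_N)/\rho_N^2$, verify the Lyapunov condition with third moments (bounded uniformly because $\rho_N\to\rho>0$), and conclude via the limiting variance $(1-\rho)/\rho^2$. Your writeup is in fact slightly more careful than the paper's in making the Slutsky step and the normalisation by $\sigma_N$ explicit.
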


In particular, Proposition \ref{prop:clt1} implies that the run-time
concentrates around $N/\rho_{N}$ with fluctuations of order $\sqrt{N}$.
Thus, the order of complexity depends on the order of $\rho_{N}$
and we have complexity $O(N)$ if
\begin{equation}
\limsup_{N\rightarrow\infty}\left|\frac{\sum_{k=1}^{N}c_{k}}{\sum_{k=1}^{N}c_{k}b_{k}}\right|<\infty.\label{eq:order_N_cond}
\end{equation}

As an example consider the case where $c_{1}=\ldots=c_{N}=c$ are all identical. Then at time $t$
\begin{equation}
\rho_{t, N}=\frac{1}{N}\sum_{k=1}^{N}w_{t, k}.\label{eq:norm_est}
\end{equation}
An instance of this setting is the locally optimal proposal presented
in Section \ref{subsec:loc_opt}. It is well known that as $N\rightarrow\infty$
(\ref{eq:norm_est}) will converge towards $p(y_{t}\mid y_{1:t-1})$
and indeed we see that the algorithm's run-time will concentrate around
a quantity of order $N$.

\begin{rem}
After the Alias table is constructed, the algorithm can be implemented in
parallel. This can lead
to considerable gains if the number of particles used in the particle
filter is high.
If $c_{1}=\ldots=c_N=c$, or if the constants denote a distribution for which a table as in the Alias method is already implemented before program execution, the above algorithm can be implemented entirely in parallel.
\textcite{murray2016parallel} consider the case of multinomial
resampling using a rejection sampler with uniform proposal
on the set $\{1,\ldots,N\}$ with known weights
and show that this algorithm has parallel complexity $O(\log N).$
\end{rem}

\subsection{Estimating the Probability of Stopping}

In our later applications we will be interested in evaluating the
success probability
\[
\rho_{N}=\frac{\sum_{k=1}^{N}c_{k}b_{k}}{\sum_{k=1}^{N}c_{k}}.
\]
Assume that we sample $N$ independent realizations from a multinomial
distribution using the Bernoulli race algorithm described above and that
$C_{1,N},\ldots,C_{N,N}$ are the geometric random variables that
count the number of trials until the algorithm accepts a value and
terminates. Then $\hat{\rho}_N^{\mathrm{naive}}=1/\bar{C}_{N}$, where
$\bar{C}_{N}=\sum_{k=1}^{N}C_{k,N}/N$ is a consistent estimator of
$\rho$ since $\mathbb{E}(C_{i,N})=1/\rho_N$ for all $i=1,\ldots,N$
and therefore by a weak law of large numbers $\bar{C}_{N}{\rightarrow}1/\rho$ in probability
and $1/\bar{C}_{N}{\rightarrow}\rho$ in probability by the continuous
mapping theorem. Unfortunately, the estimator $\hat{\rho}_N^{\mathrm{naive}}$
is not unbiased which would be desirable. However, this can be remedied
by constructing the minimum variance unbiased estimator \autocite[see][Definition 7.1.1]{hogg2005introduction} for a geometric distribution. This result is well known, but we repeat it here for
convenience.
\begin{prop}
For $N\in\mathbb{N}$ let $C_{1,N},\ldots,C_{N,N}$ denote independent
samples from a geometric distribution with success probability $\rho_N$,
then the minimum variance unbiased estimator for $\rho_N$ is
\begin{equation}
\hat{\rho}_{N}^{\mathrm{mvue}}=\frac{N-1}{\sum_{k=1}^{N}C_{k,N}-1}.\label{eq:geom_umvue}
\end{equation}
\end{prop}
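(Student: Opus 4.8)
The plan is to invoke the Lehmann--Scheff\'e theorem. Writing the mass function of each draw as $\mathbb{P}(C_{k,N}=c)=\rho_N(1-\rho_N)^{c-1}=\rho_N\exp\{(c-1)\log(1-\rho_N)\}$ for $c\in\{1,2,\ldots\}$ exhibits the model as a one-parameter exponential family with natural parameter $\log(1-\rho_N)$, whose range contains a non-degenerate interval as $\rho_N$ varies over $(0,1)$. Hence, by standard exponential-family theory, $T_N:=\sum_{k=1}^N C_{k,N}$ is a complete sufficient statistic for $\rho_N$. By Lehmann--Scheff\'e, any unbiased estimator of $\rho_N$ that is a measurable function of $T_N$ is the (almost surely unique) minimum variance unbiased estimator. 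Since $\widehat\rho_N^{\mathrm{mvue}}=(N-1)/(T_N-1)$ is manifestly a function of $T_N$, it remains only to verify that it is unbiased.

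For that, I would use that $T_N$, being a sum of $N$ i.i.d.\ geometric variables, is negative binomial: $\mathbb{P}(T_N=t)=\binom{t-1}{N-1}\rho_N^N(1-\rho_N)^{t-N}$ for integers $t\ge N$ (in particular $T_N-1\ge N-1$, so the estimator is well defined for $N\ge 2$). Then
\[
\mathbb{E}\!\left[\frac{N-1}{T_N-1}\right]=\sum_{t=N}^{\infty}\frac{N-1}{t-1}\binom{t-1}{N-1}\rho_N^N(1-\rho_N)^{t-N}.
\]
The evaluation reduces to two elementary identities: first, $\tfrac{N-1}{t-1}\binom{t-1}{N-1}=\binom{t-2}{N-2}$, which simplifies the summand; and second, the generating-function identity $\sum_{m\ge k}\binom{m}{k}x^{m-k}=(1-x)^{-(k+1)}$ for $|x|<1$, applied with $k=N-2$ and $x=1-\rho_N$ after reindexing $m=t-2$. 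These give $\sum_{t\ge N}\binom{t-2}{N-2}(1-\rho_N)^{t-N}=\rho_N^{-(N-1)}$, so the expectation equals $\rho_N^N\cdot\rho_N^{-(N-1)}=\rho_N$, establishing unbiasedness and hence the claim.

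There is no serious obstacle here; the statement is classical and the argument is short. The only points requiring a little care are stating completeness correctly (the natural parameter space must contain a non-degenerate interval, which it does) and the index bookkeeping in the binomial sum. One could instead avoid guessing the closed form altogether and derive $(N-1)/(t-1)$ by matching power-series coefficients in the unbiasedness equation $\sum_t g(t)\mathbb{P}(T_N=t)=\rho_N$, but verifying the given formula directly is the quickest route.
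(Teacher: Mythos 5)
Your proposal is correct and rests on the same key lemma as the paper's proof: the Lehmann--Scheff\'e theorem applied to the complete sufficient statistic $T_N=\sum_{k=1}^N C_{k,N}$ (the paper asserts completeness and sufficiency without the exponential-family justification you supply, which is a welcome addition). The two arguments diverge only in how the final formula is handled. The paper \emph{derives} it by Rao--Blackwellization: it starts from the elementary unbiased estimator $1\{C_{1,N}=1\}$, whose expectation is $\rho_N$, and computes $\mathbb{E}\bigl[1\{C_{1,N}=1\}\mid T_N=n\bigr]=(N-1)/(n-1)$ from the ratio of a negative binomial on $N-1$ draws to one on $N$ draws. You instead \emph{verify} the stated formula directly, checking unbiasedness of $(N-1)/(T_N-1)$ by summing against the negative binomial pmf via the identities $\tfrac{N-1}{t-1}\binom{t-1}{N-1}=\binom{t-2}{N-2}$ and $\sum_{m\ge k}\binom{m}{k}x^{m-k}=(1-x)^{-(k+1)}$; both computations check out. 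The paper's route is constructive (it would find the estimator even if the closed form were not given), while yours is a shade quicker given that the candidate is already stated; your remark that the estimator requires $N\ge 2$ to be well defined is a small point the paper passes over.
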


In order to understand the asymptotic behavior of the estimator (\ref{eq:geom_umvue})
we also provide the following central limit theorem.
\begin{prop}
For $N\in\mathbb{N}$ let $C_{1,N},C_{2,N},\ldots$ denote independent
samples from a geometric distribution with success probability $\rho_N$,
then
\[
\sqrt{N}\left(\hat{\rho}_{N}^{\mathrm{mvue}}-\rho_{N}\right)\overset{d}{\rightarrow}\mathcal{N}\left(0,\left(1-\rho\right)\rho^{2}\right).
\]
\end{prop}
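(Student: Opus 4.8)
The plan is to deduce the statement from the central limit theorem for $\bar C_N := N^{-1}\sum_{k=1}^{N} C_{k,N}$ established in Proposition~\ref{prop:clt1} — whose proof applies verbatim to any triangular array of geometric variables with parameter $\rho_N\to\rho\in(0,1)$ — by pushing it through the smooth map $g(x)=1/x$ via the delta method. First I would rewrite the estimator as
\[
\hat\rho_{N}^{\mathrm{mvue}}=\frac{N-1}{\sum_{k=1}^{N}C_{k,N}-1}=\frac{1-N^{-1}}{\bar C_{N}-N^{-1}},
\]
which is well defined since $C_{k,N}\geq 1$ forces $\bar C_{N}\geq 1$. The two deterministic $N^{-1}$ perturbations are $O(N^{-1})=o(N^{-1/2})$, so by Slutsky's theorem it is enough to prove the $\sqrt N$-CLT for $1/\bar C_{N}$: perturbations of order $N^{-1}$ cannot influence a limit on the $\sqrt N$ scale.

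By Proposition~\ref{prop:clt1}, $\sqrt N(\bar C_{N}-\rho_N^{-1})\overset{d}{\rightarrow}\mathcal{N}(0,(1-\rho)/\rho^{2})$, and in particular $\bar C_{N}\rightarrow\rho^{-1}$ in probability. Writing $g(x)=1/x$ (so that $g(\rho_N^{-1})=\rho_N$) and applying the mean value theorem,
\[
\sqrt N\big(1/\bar C_{N}-\rho_N\big)=\sqrt N\big(g(\bar C_{N})-g(\rho_N^{-1})\big)=g'(\xi_N)\,\sqrt N\,(\bar C_{N}-\rho_N^{-1})
\]
for some $\xi_N$ lying between $\bar C_{N}$ and $\rho_N^{-1}$. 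Since $\rho_N^{-1}\to\rho^{-1}$ and $\bar C_{N}\to\rho^{-1}$ in probability, $\xi_N\to\rho^{-1}$ in probability, and continuity of $g'$ at $\rho^{-1}>0$ gives $g'(\xi_N)\to g'(\rho^{-1})=-\rho^{2}$ in probability. A final use of Slutsky's theorem then yields
\[
\sqrt N\big(1/\bar C_{N}-\rho_N\big)\overset{d}{\rightarrow}(-\rho^{2})\cdot\mathcal{N}\big(0,(1-\rho)/\rho^{2}\big)=\mathcal{N}\big(0,\rho^{2}(1-\rho)\big),
\]
and combining with the first step proves the proposition.

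The only delicate point is that this is a \emph{triangular-array} statement: both the centering $\rho_N^{-1}$ and the derivative $g'(\rho_N^{-1})$ move with $N$, so the textbook delta method does not apply as a black box. I would therefore present the argument through the mean-value identity above, using only the convergences $\rho_N\to\rho$ and $\bar C_{N}\to\rho^{-1}$ together with continuity of $g'$ near $\rho^{-1}$ to control the random slope $g'(\xi_N)$; getting that transfer right is the crux. Once it is in place, the variance bookkeeping $(g'(\rho^{-1}))^{2}\cdot(1-\rho)/\rho^{2}=\rho^{4}(1-\rho)/\rho^{2}=\rho^{2}(1-\rho)$ and the discarding of the $N^{-1}$ corrections are entirely routine.
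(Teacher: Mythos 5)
Your proof is correct and follows essentially the same route as the paper's: invoke Proposition~\ref{prop:clt1} for $\bar C_N$, show the $-1$ corrections are negligible at the $\sqrt N$ scale, and apply the delta method with $g(x)=1/x$ to land on the variance $(1-\rho)\rho^2$. Your explicit mean-value-theorem treatment of the moving centering $\rho_N^{-1}$ is a more careful rendering of the step the paper simply labels ``by the $\delta$-Method,'' but it is the same argument.
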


\section{Bernoulli Race Particle Filter}

\subsection{Algorithm Description}

We now consider the application of the Bernoulli race algorithm
to particle filter methods. The Bernoulli race can be employed
as a multinomial resampling algorithm, \textsf{Mult}$\{w_1, \ldots, w_N\}$, in \autoref{alg:particle_filter}.
However, the clear advantage is that this algorithm can be implemented even if the true
weights are not analytically available, but we do have
access to non-negative unbiased estimates. A $[0,1]$-valued unbiased estimator $\hat{b}$ for $b$ can be converted into an unbiased
coin flip by noting that $\mathbb{P}\left(V \leq \hat{b}\right) = b,$ where $V \sim \mathrm{Unif}[0,1]$ \autocite[see e.g. Lemma 2.1 in][]{latuszynski2011simulating}.

We will refer to such a particle filter as a Bernoulli race particle filter (BRPF).

\subsection{Likelihood estimation\label{subsec:likelihood_est}}

Even though the Bernoulli race resampling scheme enables us to resample according to
the true weights, the normalizing constant or marginal likelihood \eqref{eq:marginal_likelihood} remains intractable.
We show here how we can still obtain an unbiased
estimator for the normalizing constant. We first recall that in particle filters an estimator of the normalizing constant is obtained by
\begin{equation}
\hat{p}(y_{1:T})=\prod_{t=1}^{T}\hat{p}(y_{t}\mid y_{1:t-1})=\prod_{t=1}^{T}\frac{1}{N}\sum_{k=1}^{N}w_{t,k}.\label{eq:unbiased_est}
\end{equation}
This estimator is well-known to be unbiased \autocite[see][Chapter 7]{del2004feynman}. If the weights are not available this estimator cannot be employed. Fortunately, the quantity (\ref{eq:unbiased_est})
comes up naturally when running the BRPF. Recall
that the probability for the Bernoulli race to stop at a given iteration,
i.e. to accept a value, is
\[
\mathbb{P}\left(C_{j,N} = 1\right) = \frac{\sum_{k=1}^{N}c_{t, k}b_{t, k}}{\sum_{k=1}^{N}c_{t, k}}=\frac{\frac{1}{N}\sum_{k=1}^{N}w_{t, k}}{\frac{1}{N}\sum_{k=1}^{N}c_{t, k}}.
\]
Thus, conditional on the weights $w_{t,k}, k=1,\ldots,N$,
an unbiased estimator of (\ref{eq:unbiased_est}) is given by virtue
of (\ref{eq:geom_umvue}):
\begin{equation}
\label{eq:unbiased_est_bernoulli}
\hat{\rho}_{N,T}=\prod_{t=1}^{T}\frac{1}{N}\sum_{k=1}^{N}c_{t, k}\cdot\frac{N-1}{\sum_{k=1}^{N}C_{k,N}-1}.
\end{equation}

\begin{thm}
The estimator \eqref{eq:unbiased_est_bernoulli} is unbiased for $p(y_{1:T})$, i.e. $\mathbb{E}\left[\hat{\rho}_{N,T}\right] = p(y_{1:T})$.
\end{thm}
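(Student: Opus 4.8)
The plan is to establish unbiasedness in two moves: condition on everything the filter generates apart from the Bernoulli-race trial counts, reduce $\hat\rho_{N,T}$ in conditional expectation to the classical SMC normalising-constant estimator \eqref{eq:unbiased_est}, and then quote the well-known unbiasedness of the latter. Write $C^{(t)}_{k,N}$ for the number of trials the Bernoulli race needs to produce the $k$-th resampled index at time $t$, and let $\mathcal H_T$ be the $\sigma$-algebra generated by all particles $X^i_{1:t}$, all proposals $\widetilde X^i_t$ and all accepted resampling indices for $t=1,\dots,T$ — i.e.\ everything except the counts. All weights $w_{t,k}$, the known constants $c_{t,k}$, and the stopping probabilities $\rho_{t,N}=(\sum_k w_{t,k})/(\sum_k c_{t,k})$ are then $\mathcal H_T$-measurable.

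The crux is the conditional law of the counts: I claim that, given $\mathcal H_T$, the vectors $(C^{(t)}_{\cdot,N})_{t=1}^T$ are mutually independent and $C^{(t)}_{1,N},\dots,C^{(t)}_{N,N}$ are i.i.d.\ $\mathrm{Geom}(\rho_{t,N})$. Each of the $N$ resampled indices at a fixed time $t$ comes from an independent run of Algorithm~\ref{Ref:alg:bern_race} with its own fresh multinomial proposals and coin flips; for one such run, conditional on the time-$t$ weights, the number of trials $C$ is $\mathrm{Geom}(\rho_{t,N})$ by \eqref{eq:geometric_success}, and $C$ is independent of the accepted index $I^\star$ because, for $c\ge1$ and $i=1,\dots,N$, $\mathbb P(C=c,\,I^\star=i)=(1-\rho_{t,N})^{c-1}\,c_{t,i}b_{t,i}\big/\sum_k c_{t,k}=\mathbb P(C=c)\,\mathbb P(I^\star=i)$, using the accept-index law from Proposition~\ref{prop:2}. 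Hence the time-$t$ count vector is independent of the time-$t$ accepted indices, and therefore of every particle/proposal/index at times $\ge t$ (these are measurable functions of those indices plus further randomness not involving the counts) and of all upstream quantities (whose randomness is disjoint from the time-$t$ coin flips); distinct runs and distinct times use disjoint randomness, giving the joint statement. This uses $N\ge2$ and $\rho_{t,N}>0$ so that \eqref{eq:geom_umvue} and the geometric law are well defined — the denominator satisfies $\sum_k C^{(t)}_{k,N}-1\ge N-1\ge1$.

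Granting the claim, the minimum-variance unbiased estimator of a geometric success probability given in \eqref{eq:geom_umvue} yields $\mathbb E\big[(N-1)/(\sum_k C^{(t)}_{k,N}-1)\,\big|\,\mathcal H_T\big]=\rho_{t,N}$ for each $t$, and conditional independence across $t$ lets the conditional expectation of the product factorise, so $\mathbb E[\hat\rho_{N,T}\mid\mathcal H_T]=\prod_{t=1}^T \tfrac1N\sum_k c_{t,k}\cdot\rho_{t,N}=\prod_{t=1}^T\tfrac1N\sum_k w_{t,k}$, which is exactly \eqref{eq:unbiased_est}. Taking expectations, $\mathbb E[\hat\rho_{N,T}]=\mathbb E\big[\prod_{t=1}^T\tfrac1N\sum_k w_{t,k}\big]$; since by Proposition~\ref{prop:2} the Bernoulli race performs exact multinomial resampling according to the true weights $w_{t,k}$ (and the $N$ races are independent), the particle system has the law of a standard particle filter with these weights, so this expectation equals $p(y_{1:T})$ by the classical unbiasedness of the SMC likelihood estimator \autocite[][Ch.~7]{del2004feynman}. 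The only delicate step is the independence of the counts from the genealogy — the number of rejected proposals carries no information about which proposal is finally accepted — which is precisely what decouples the geometric/MVUE averaging from the standard SMC unbiasedness argument.
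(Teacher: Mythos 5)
Your proof is correct, and it takes a genuinely different route from the paper's. The paper proves the result by an explicit backward induction over $t$: at each step it conditions on the pre-resampling particles, replaces the geometric factor $(N-1)/(\sum_k C_{k,N,t}-1)$ by its conditional expectation $\rho_{t,N}$, and then integrates out the time-$t$ randomness by hand, effectively re-deriving the unbiasedness of the standard SMC likelihood estimator along the way (and, as written, it specializes to bootstrap-filter weights $w_T^k = g(y_T\mid X_T^k)$). You instead condition once on the entire genealogy $\mathcal H_T$, show that the conditional expectation of $\hat\rho_{N,T}$ collapses to the classical estimator \eqref{eq:unbiased_est}, and then invoke the known unbiasedness of that estimator \autocite[][Ch.~7]{del2004feynman} as a black box; this is shorter, works for general proposals, and cleanly separates the new ingredient from the classical one. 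The one genuinely new thing your argument needs — and which you correctly identify as the crux and verify via the factorization $\mathbb P(C=c, I^\star=i)=(1-\rho_{t,N})^{c-1}c_{t,i}b_{t,i}/\sum_k c_{t,k}$ — is that the trial count of each Bernoulli race is independent of the accepted index, so that conditioning on the accepted indices and everything downstream of them does not perturb the i.i.d.\ geometric law of the counts. The paper's induction relies on exactly this fact too (e.g.\ when it conditions the time-$T$ factor on $C_{1:N,N,T-1}$ and still treats the time-$(T-1)$ resampling as exact multinomial) but never states or proves it; making it explicit is a genuine improvement in rigour, not just a stylistic difference. Your side remarks ($N\ge 2$ so the MVUE denominator is at least $1$, and $\rho_{t,N}>0$) are also correct and worth keeping.
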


\section{Applications}

\subsection{Locally optimal proposal}

We start with a Gaussian state space model. In linear Gaussian state
space models the Kalman Filter can be used to analytically evolve
the system through time. Nevertheless, the aim here is a proof of concept of the BRPF. We assume the latent variables follow
the Markov chain
\[
X_t=aX_{t-1}+V_{t}
\]
where we take $a = 0.8$ and we observe these hidden variables through the observation equation
\[
Y_{t}=X_{t}+W_{t}
\]
with initialization $X_{0}\sim\mathcal{N}(0,5)$ and $V_{t}\sim\mathcal{N}(0,5)$,
$W_{t}\sim\mathcal{N}(0,5)$. In this particular instance the locally optimal proposal is available analytically, but in most practical scenarios this will not be the case. For this reason  sampling from the locally optimal proposal is implemented using a rejection sampler that proposes from the state equation.  
Coin flips for the weights \eqref{eq:local_opt_weight} can be obtained by sampling from the model \mbox{$\xi_t \sim f(\cdot \mid x_{t-1})$} and computing 
\begin{equation*}
	Z_t = 1 \left\{U \leq \exp\left(-\frac{(y_t - \xi_t)^2}{10} \right)\right\}, \quad U\sim \mathrm{Unif}[0, 1].
\end{equation*}
This leads to the choice $c_{t, 1} = \ldots = c_{t, N} = 1/\sqrt{10\pi}$ and 
\begin{equation*}
	b_{t,k} = \int \exp\left(-\frac{(y_t - x)^2}{10} \right) f(x \mid x_{t-1}) dx_{t-1}.
\end{equation*}
Note that $c_{t,k}$ is defined such that 
\begin{align*}
	\sup_{x, x', y} b_t(x, x', y) = 1
\end{align*}
Such a choice ensures that the acceptance probability in the Bernoulli race, \autoref{Ref:alg:bern_race}, is as large as possible. This can lead to a considerable speedup when using the Bernoulli resampling algorithm.

\subsubsection{Complexity and Run-time}

Figure \ref{fig:kf_comp} shows the run-time for RWPF and the BRPF for the Gaussian state space model.
The BRPF clearly has linear complexity in the number of particles. In a sequential implementation
the Bernoulli race (orange) performs worse than the classical resampling scheme (blue), but the difference vanishes when implementing a parallel version of the Bernoulli race (green, with 32 cores). This demonstrates the speedup due to parallel sampling of the coin flips. Since we need many Bernoulli random variables each of which is computationally cheap this algorithm lends itself to a implementation using GPUs to further improve the performance of the Bernoulli race.

\begin{figure}[h]
\includegraphics[width=1\linewidth]{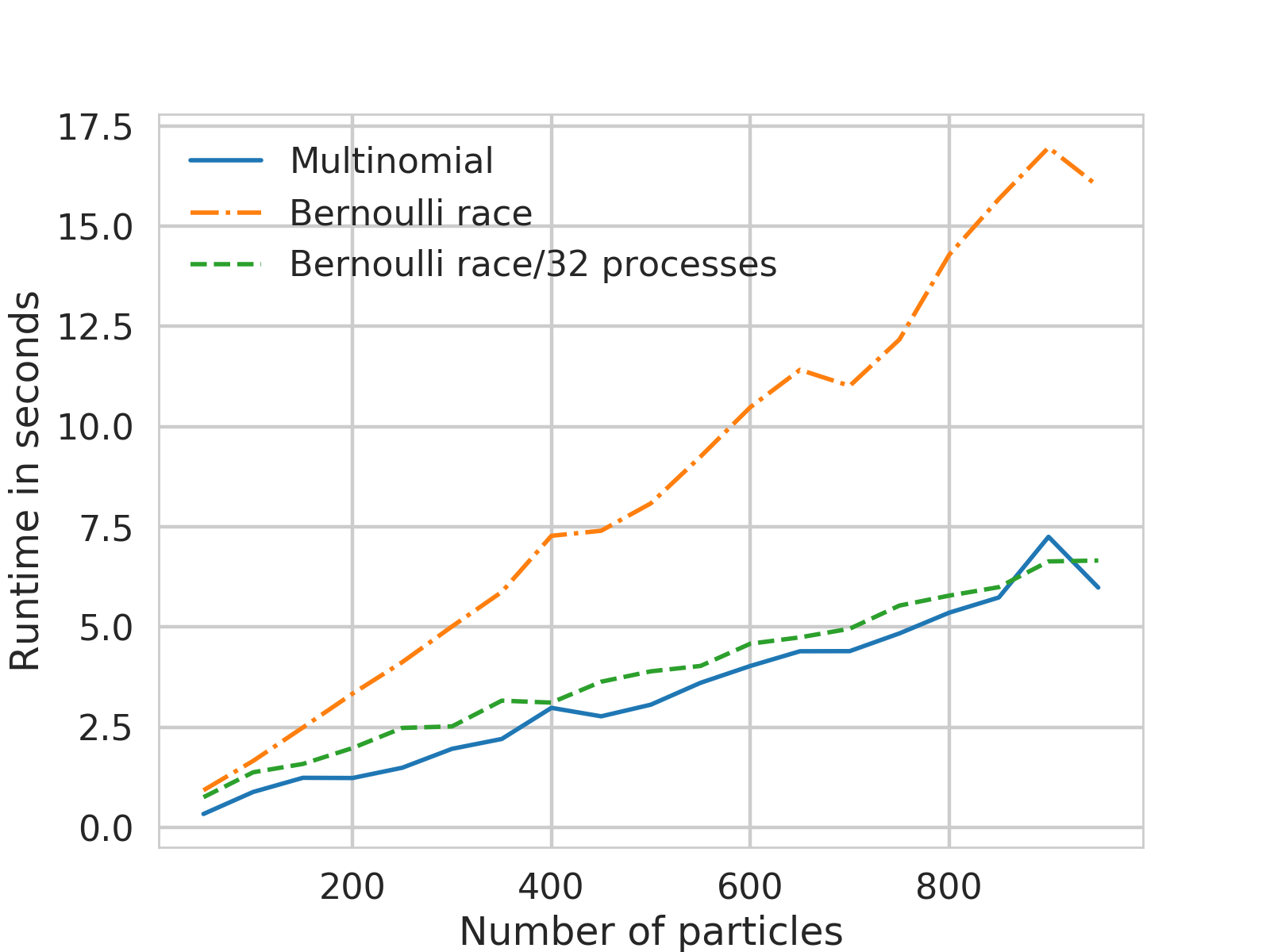}
\caption{Comparison of the run-time for RWPF and BRPF for Gaussian state space model. For BRPF we show a sequential
and a parallel implementation with 32 cores. For all algorithms we show wall-clock time
for number of particles $N$. }
\label{fig:kf_comp}
\end{figure}

\subsubsection{Efficiency}

As alluded to earlier, if Bernoulli resampling is performed,
the variance for any Monte Carlo estimate (\ref{eq:smc_est}) will
be the same as if the true weights were known and one applies
standard multinomial resampling. From Proposition \ref{prop:smaller_variance} it follows that the asymptotic variance of any Monte Carlo estimate
of type (\ref{eq:smc_est}) will be smaller when applying a BRPF over a RWPF. While the variance for functions (\ref{eq:smc_est}) in the BRPF coincides with the standard particle filter we do not have access to the same estimator for the normalising constant and instead need to use the methods from Section \ref{subsec:likelihood_est}.

For these reasons we will compare the performance of the BRPF with the RWPF for a set of different test functions $h\colon \mathsf{X}^T \rightarrow \mathbb{R}$ by comparing the variance of PF estimates \eqref{eq:smc_est}.
We then study the estimation of the normalizing constant separately. For the simulation
we consider the functions
\begin{align*}
h_{1}(x_{1:T}) & =\frac{1}{T}\sum_{t=1}^{T}x_{t}, &\quad h_{2}(x_{1:T}) & =\|x_{1:T}\|_{2},\\
h_{3}(x_{1:T}) & =x_{T}, &\quad h_{4}(x_{1:T}) & =\left(x_{T}-\bar{x}_{T}\right)^{2},
\end{align*}
where $\bar{x}_{T}=\sum_{i=1}^{N}x_{T}^{i}/N$. The PF approximations \eqref{eq:smc_est} using $h_{3}$ and $h_{4}$ are estimators for the quantities \mbox{$\mu_{T}=\int x_{T}p(dx_{T}\mid y_{1:T})$}
and \mbox{$\int\left(x_{T}-\mu_{T}\right)^{2}p(dx_{T}\mid y_{1:T})$}.

All estimates are based on 100 runs of each particle filter using $N=100$ particles. The
results are collected in Table \ref{tab:gaussian_ssm}. We denote
the standard deviation of the test functions under the BRPF as $\sigma_{\mathrm{BRPF}}$ and $\sigma_{\mathrm{RWPF}}$ for the RWPF. 
All measures indicate a reduction in the standard deviation. For the estimator of the function $h_{1}$, the standard deviation is
reduced by $26\%$ when compared to the RWPF.
\begin{table}
\caption{For test functions $h_{i},\ldots i=1,\ensuremath{\ldots,4}$ the
standard deviation for RWPF and BRPF over 100 iterations.}
\label{tab:gaussian_ssm}
\begin{center}%
\begin{tabular}{cccc}
\toprule
Test function & $\sigma_{\mathrm{RWPF}}$ & $\sigma_{\mathrm{BRPF}}$ & $\sigma_{\mathrm{BRPF}}/\sigma_{\mathrm{RWPF}}$\tabularnewline
\midrule
\midrule
$h_{1}$ & 0.17 & 0.12 & $0.74$\tabularnewline
\midrule
$h_{2}$ & 0.93 & 0.78 & $0.84$\tabularnewline
\midrule
$h_{3}$ & 0.19 & 0.19 & $0.96$\tabularnewline
\midrule
$h_{4}$ & 0.42 & 0.40 & $0.94$\tabularnewline
\bottomrule
\end{tabular}
\end{center}
\end{table}

We now investigate the estimates for the normalizing constant. \autoref{tab:gaussian_ssm_Z}
shows the standard deviation of the \mbox{(log-)}normalizing constant of a Gaussian
state space model for $T=50$ time steps. In this setting it is not
obvious why the Bernoulli race resampling estimate should outperform the
estimate provided by the RWPF. In our case however, we
find that the BRPF performs better.

\begin{table}
\caption{Comparison of the normalizing constant estimate for different implementations
of the particle filter. }
\label{tab:gaussian_ssm_Z}
\begin{center}%
\begin{tabular}{cc}
\hline
 & $\mathrm{sd}(\log\hat{p}(y_{1:T}))$\tabularnewline
\hline
\hline
BRPF & 0.55\tabularnewline
\hline
RWPF & 0.66\tabularnewline
\hline
\end{tabular}
\end{center}
\end{table}

\subsection{Partially Observed Diffusion}

We use the sine diffusion, a commonly used example in the context of partially observed diffusions, see e.g. \textcite{fearnhead2008particle},
given by the stochastic differential equation (SDE)
\begin{equation*}
dX_{s}=\sin(X_{s})dt+dB_{s}, \quad s \in [0, 15].
\end{equation*}
Here, $(B_{s})_{s\in [0, 15]}$ denotes a Brownian motion and the drift function in
(\ref{eq:const_diff}) is $a(x)=\sin(x).$ Consequently, $A(x)=-\cos(x)$
and with $\phi(x)=\left(\sin(x)^{2}+\cos(x)\right)/2$, the transition
density is
\begin{align}\notag
f_{\Delta t}(x,y) &= \varphi(y;x,\Delta t)\exp\left(-\cos(y)+\cos(x)\right)\\\label{eq:sinx}
& \quad \times \mathbb{E}\left[\exp\left(-\int_{0}^{\Delta t}\phi(W_{s})ds\right)\right].
\end{align}
We observe the state of the SDE through zero mean Gaussian noise with standard deviation 5 yielding weights
\begin{equation*}
	w(x_{t-1}, x_t, y_t) = \frac{\varphi(y_{t}; x_{t}, 5^2)f_{\Delta t}(x_{t} \mid x_{t-1})}{q(x_{t}\mid x_{t-1}, y_t)},
\end{equation*}
As the proposal $q(\cdot \mid x_{t-1}, y_t)$ we take one step of the Euler--Maruyama scheme.

\begin{figure}[h]
\includegraphics[width=\linewidth]{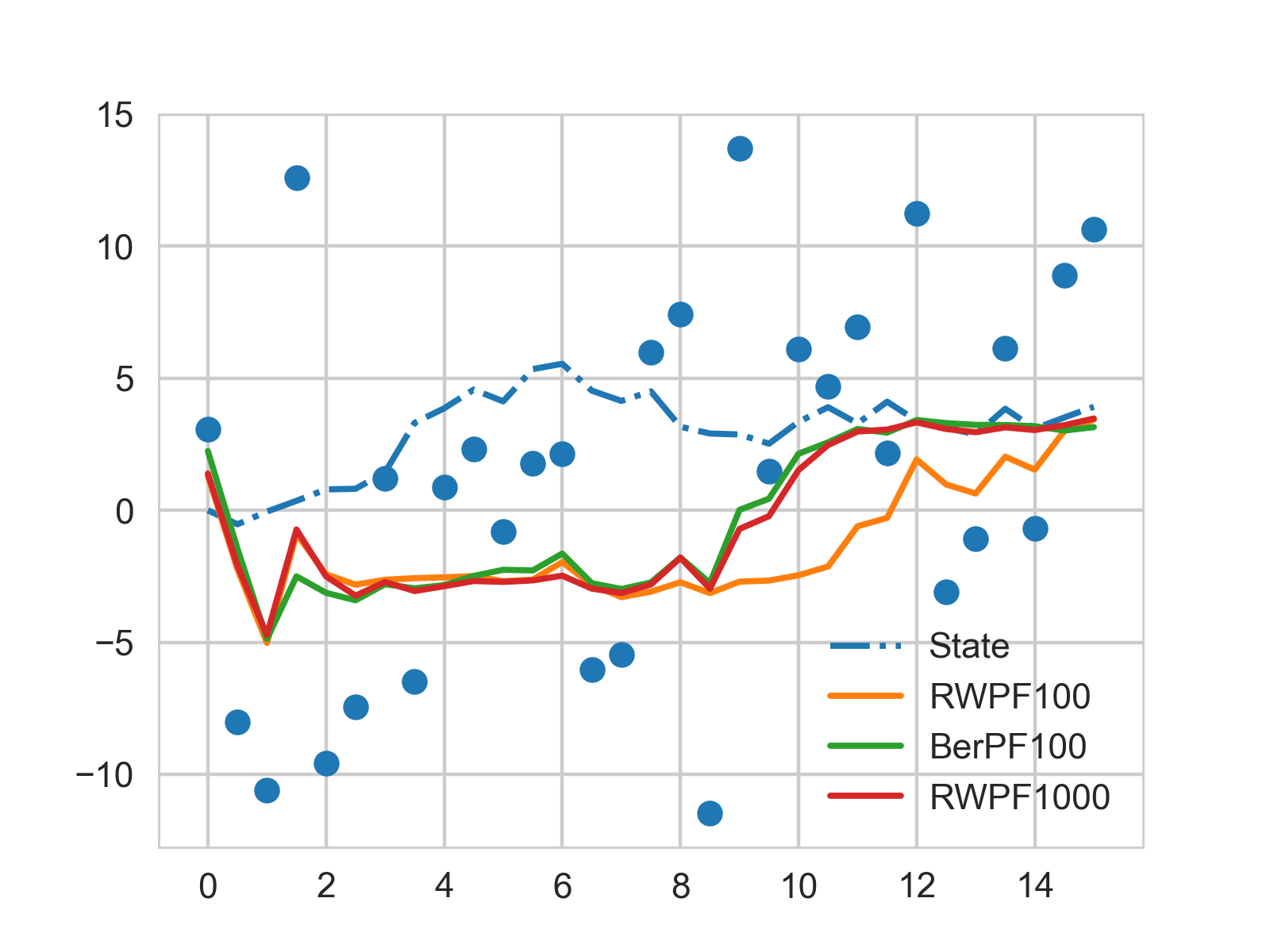}
\caption{Tracking performance for different particle filters; RWPF with 100
particles (RWPF100) and 1000 particles (RWPF1000) as well as a particle filter using Bernoulli resampling
with 100 particles (BerPF100).}
\label{fig:tracking}
\end{figure}

The weights are intractable because of the expectation on the right-hand side in \eqref{eq:sinx}.
The construction of unbiased coin flips can often rely on similar techniques as the construction of unbiased estimators.
We can construct an unbiased estimator for the transition density in the following way. Fix a Brownian bridge $\left(W_{s}\right)_{s\in[0,\Delta t]}$
starting at $x$ and finishing at $y$. Sample $\kappa\sim\mathrm{Pois}(\lambda \Delta t)$,
$U_{1}\ldots,U_{\kappa}\sim U[0,\Delta t]$ then the Poisson estimator \autocite{Beskos2006}
is given by
\[
\hat{P}(\kappa,U_{1}\ldots,U_{\kappa})=\exp\left\{ \left(\lambda-c\right)\Delta t\right\} \prod_{i=1}^{\kappa}\frac{\left\{ c-\phi\left(W_{U_{i}}\right)\right\} }{\lambda},
\]
where $c$ is chosen such that the estimator is non-negative. The Poisson estimator is unbiased
\begin{equation}
\mathbb{E}\left[\hat{P}(\kappa,U_{1}\ldots,U_{\kappa})\right]=\mathbb{E}\left[\exp\left(-\int_{0}^{\Delta t}\phi(W_{s})ds\right)\right].\label{eq:exp_bb}
\end{equation}

\begin{figure}
\includegraphics[width=\linewidth]{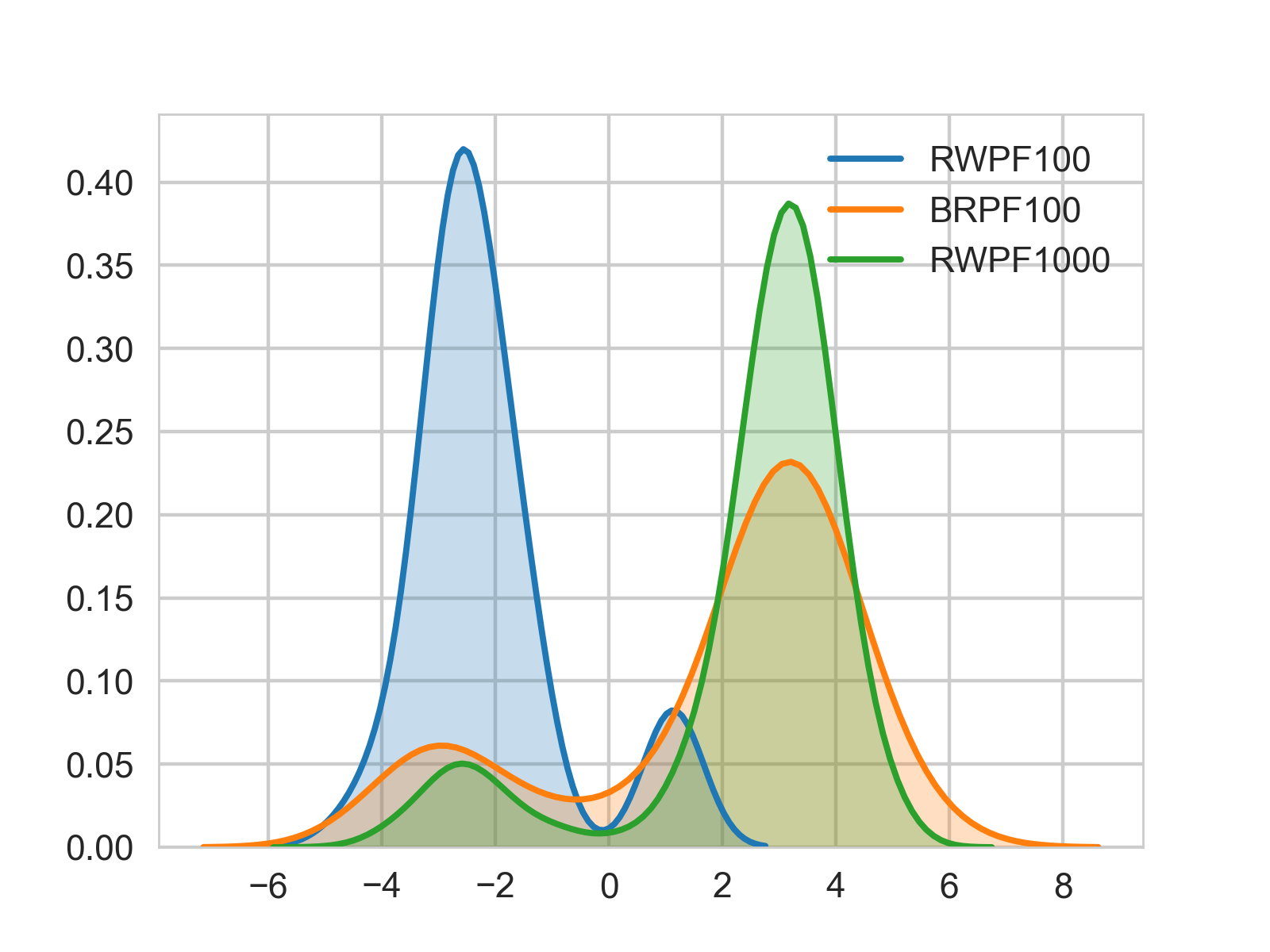}
\caption{Kernel density estimate of $p(x_{10}\mid y_{1:10})$ based on particle approximations from a RWPF with 100 particles (RWPF)
a BRPF with 100 particles (Bernoulli) and a RWPF with 1000 particles.}
\label{fig:kde_plot}
\end{figure}

For the purposes of implementing the BRPF, we need to construct
a coin flip with success probability proportional to (\ref{eq:exp_bb}). 
We use the probability
generating function approach which works analogously to the Poisson estimator.
Sample $\kappa \sim \mathrm{Pois}(\lambda\Delta t), V_i\sim \mathrm{Unif}[0,1], i=1, \ldots \kappa$,
then
\begin{equation*}
	Z = \prod_{i=1}^\kappa 1\left\{V_i \leq \frac{\left\{ c-\phi\left(W_{U_{i}}\right)\right\} }{\lambda}\right\}
\end{equation*}
has success probability
\begin{equation*}
	\exp\left\{(c-\lambda) \Delta t\right\}\mathbb{E}\left[\exp\left(-\int_{0}^{\Delta t}\phi(W_{s})ds\right)\right].
\end{equation*}
Hence, to implement the BRPF we can choose 
\begin{align*}
	c_{t,k} & = \frac{\varphi(y_t;x_t,5^2)\varphi(x_t;x_{t-1},\Delta t)}{\varphi(y_t;x_{t-1} +  \Delta t\sin(x_{t-1}),\Delta t)} \\
				& \quad \times \exp\left(-\cos(x_{t})+\cos(x_{t-1})-(c-\lambda) \Delta t\right)\\
	b_{t,k} & = \exp\left\{(c-\lambda) \Delta t\right\}\mathbb{E}\left[\exp\left(-\int_{0}^{\Delta t}\phi(W_{s}^k)ds\right)\right],
\end{align*}
where $(W_s^k)_{s\in[0, \Delta t]}$ denotes a Brownian bridge from $x_{t-1}^k$ to $x_{t}^k$. 

\subsubsection{Complexity and Run-time}

As in the previous example, we observe the BRPF to be of order $O(N)$ and slower than the RWPF. However, implementing
the Bernoulli race in parallel yields almost the same performance in terms of run-time. The details can be found in the 
supplementary material.

\subsubsection{Efficiency}

One run for the RWPF and the BRPF is shown in \autoref{fig:tracking}, where we show the true
state of the SDE as well as the noisy observations and the particle filter approximations. In this scenario
precise state estimation is hampered by the high noise and resulting partial multimodality of the filtering distribution as shown
in \autoref{fig:kde_plot}. For $N=100$, \autoref{fig:tracking} shows that the BRPF
performs much better as it finds the true state earlier. The RWPF finds this trajectory
only when the number of particles is increased (here we show also the case $N=1000$).
With the same test functions as above we compare both algorithms
in \autoref{tab:sine_ssm_test}. We observe gains for all functions, with the most significant gain for the conditional mean, $h_3$. Again, the Bernoulli race will
ordinarily be slower, but most of the difference in run-time vanishes when the Bernoulli race is implemented
in parallel. Further details are provided in the supplementary material.

\begin{table}
\caption{Comparison of the normalizing constant estimate for different implementations
of the particle filter. }
\label{tab:sine_ssm_test}
\begin{center}
\begin{tabular}{cccc}
\toprule
Test function & $\sigma_{\mathrm{RWBF}}$ & $\sigma_{\mathrm{BRPF}}$ & $\sigma_{\mathrm{BRPF}}/\sigma_{\mathrm{RWPF}}$\tabularnewline
\midrule
\midrule
$h_{1}$ & 1.41 & 1.27 & 0.91 \tabularnewline
\midrule
$h_{2}$ & 1.61 & 1.58 & 0.98 \tabularnewline
\midrule
$h_{3}$ & 1.26 & 0.64 & 0.50 \tabularnewline
\midrule
$h_{4}$ & 1.09 & 0.87 & 0.80 \tabularnewline
\bottomrule
\end{tabular}
\end{center}
\end{table}

\begin{table}
\caption{Comparison of the normalizing constant estimate for different implementations
of the particle filter. }
\label{tab:normal_constant}
\begin{center}%
\begin{tabular}{ccc}
\hline
 & $\mathrm{sd}(\log\hat{p}(y_{1:T}))$\tabularnewline
\hline
\hline
RWPF  & 3.83 \tabularnewline
\hline
BRPF & 3.11 \tabularnewline
\hline
Bootstrap & 3.13\tabularnewline
\hline
\end{tabular}
\end{center}
\end{table}
As a final test we use both particle filters for estimating the ($\log$-)normalizing
constant. The results are listed in \autoref{tab:normal_constant}. For comparison
we also employ a bootstrap particle filter using the exact algorithm \autocite{beskos2005exact}
to propose from the model. We observed this implementation to be slower than the other two.
The BRPF estimate \eqref{eq:unbiased_est_bernoulli} outperforms the RWPF and the bootstrap PF.

\section{Conclusion}

We have introduced the idea of Bernoulli races to resampling in particle
filtering, utilizing the equivalence of unbiased estimation and the construction
of unbiased coin-flips. This algorithm provides the first resampling
scheme to allow for exact implementation of multinomial resampling
when the weights are intractable, but unbiased estimates are available.
We have shown that our algorithm has a complexity of order $N$, like standard multinomial resampling, and demonstrated its advantages over alternative methods in a variety of settings. In doing so we have focused our attention to the resampling step. Further gains using
our algorithm could be obtained by considering auxiliary particle filters and to
resample only if the effective sample size drops beyond a certain threshold. We expect both
to positively impact the performance of the BRPF.

\printbibliography[heading=subbibliography] 
\end{refsection}

\appendix
\onecolumn
\hsize\textwidth
\linewidth\hsize \toptitlebar {\centering
{\Large\bfseries Supplementary Material to Bernoulli Race Particle Filters \par}}
\bottomtitlebar 

\setcounter{thm}{2}
\begin{refsection}

\section{Proofs}
When we say that $T$ is a geometric random variable with parameter, or success probability, $p$ we mean that 
$$\mathbb{P}(T\geq k) =(1-p)^k, \qquad k\geq 1.$$
\begin{prop}
Assume $\lim_{N\rightarrow\infty}\rho_{N}=:\rho\in(0,1)$.
Then we have the following central limit theorem for the average number
of coin flips as $N\rightarrow\infty$
\[
\sqrt{N}\left(\frac{1}{N}\sum_{j=1}^{N}C_{j,N}-\frac{1}{\rho_{N}}\right)\overset{d}{\rightarrow}\mathcal{N}\left(0,\frac{1-\rho}{\rho^{2}}\right),
\]
where $\overset{d}{\rightarrow}$ denotes convergence in distribution.
\end{prop}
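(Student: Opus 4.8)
The plan is to recognize $\{C_{j,N}:1\le j\le N\}$ as a triangular array whose $N$-th row consists of i.i.d.\ geometric random variables with common success probability $\rho_N$, and then to apply a Lyapunov central limit theorem for such arrays after centering by $1/\rho_N$ and scaling by $\sqrt N$.

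First I would record the moments: a geometric variable with success probability $p$ has mean $1/p$ and variance $(1-p)/p^2$, so $\mathbb{E}[C_{j,N}]=1/\rho_N$ and $v_N:=\mathrm{Var}(C_{j,N})=(1-\rho_N)/\rho_N^2$. Since $\rho_N\rightarrow\rho\in(0,1)$, we have $v_N\rightarrow v:=(1-\rho)/\rho^2\in(0,\infty)$, and, crucially, $\rho_N$ is eventually bounded away from both $0$ and $1$; say $\rho_N\in[\rho/2,(1+\rho)/2]$ for all $N\ge N_0$. Set $X_{j,N}:=(C_{j,N}-1/\rho_N)/\sqrt N$, so that $S_N:=\sum_{j=1}^N X_{j,N}$ is exactly the quantity appearing in the statement and $\mathrm{Var}(S_N)=v_N$.

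Next I would verify the Lyapunov condition with exponent $3$. The third absolute central moment $m_N:=\mathbb{E}\,|C_{1,N}-1/\rho_N|^3$ is a continuous function of $\rho_N$ on the compact set $[\rho/2,(1+\rho)/2]\subset(0,1)$ (or one bounds it directly using $\mathbb{E}[C^3]=O(\rho_N^{-3})$), hence is bounded uniformly over $N\ge N_0$ by some constant $M$. Therefore
\[
\frac{1}{(Nv_N)^{3/2}}\sum_{j=1}^N\mathbb{E}\,\bigl|C_{j,N}-1/\rho_N\bigr|^3=\frac{N\,m_N}{(Nv_N)^{3/2}}=\frac{m_N}{\sqrt N\,v_N^{3/2}}\le\frac{M}{\sqrt N\,v_N^{3/2}}\longrightarrow 0,
\]
since $v_N\rightarrow v>0$. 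The triangular-array CLT then yields $S_N/\sqrt{v_N}\overset{d}{\rightarrow}\mathcal{N}(0,1)$, and because $\sqrt{v_N}\rightarrow\sqrt v$, Slutsky's theorem gives $S_N\overset{d}{\rightarrow}\mathcal{N}(0,v)=\mathcal{N}\bigl(0,(1-\rho)/\rho^2\bigr)$, which is the claim.

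The only genuine obstacle is the triangular-array nature of the problem: because $\rho_N$ drifts with $N$ one cannot invoke the classical i.i.d.\ CLT and instead needs uniform control of a moment of order $>2$, but this is immediate from the hypothesis $\rho\in(0,1)$, which keeps $\rho_N$ away from the degenerate endpoints $0$ and $1$. An equally short alternative is a direct characteristic-function argument: writing $\varphi_N$ for the characteristic function of $C_{1,N}-1/\rho_N$, the expansion $\varphi_N(s)=1-\tfrac12 v_N s^2+R_N(s)$ with $|R_N(s)|\le M|s|^3$ uniformly in $N\ge N_0$ gives $\mathbb{E}[e^{itS_N}]=\varphi_N(t/\sqrt N)^N\rightarrow e^{-vt^2/2}$ via the standard $(1+z_N/N)^N$ limit, and Lévy's continuity theorem concludes.
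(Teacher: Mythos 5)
Your proof is correct and follows essentially the same route as the paper: both treat $\{C_{j,N}\}$ as a triangular array of row-wise independent geometrics, verify the Lyapunov condition with exponent $3$ via a uniform bound on the third absolute central moment (which holds because $\rho_N$ is eventually bounded away from $0$ and $1$), and conclude with the triangular-array CLT. Your write-up is in fact a bit more careful than the paper's (which contains minor typos in the displayed Lyapunov ratio and in the stated limit of $s_N^2$), and the final Slutsky step and the characteristic-function alternative are fine but add nothing essentially new.
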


 \begin{proof}
 The random variables $Z_{i,N}=\left(C_{i,N}-1/\rho_{N}\right)/\sqrt{N}$
 form a triangular array 
 \[
 \left\{ Z_{i,N};i=1,\ldots N,N\in\mathbb{N}\right\} 
 \]
  of row wise independent random variables. 
  Define $s_{N}^{2}=\sum_{i=1}^{N}\mathrm{var}\left(Z_{i,N}\right)=\left(1-\rho_{N}\right)/\rho_{N}^{2}$ and notice that $s_N^2 \to 1-\rho \in (0,1)$ as $N\to \infty$. 
In addition one can easily show that $\mathbb{E}[|Z_{1,N}|^3]$ are bounded uniformly in $N$ and therefore
\begin{align*}
\frac{1}{s_N^3} \sum_{i=1}^N \mathbb{E} \left[Z_{i,N} \right]
&= \frac{\rho_N^{3/2}}{(1-\rho_N)^3}\frac{N}{N^{3/2}}\mathbb{E}\left[\left|Z_{1,N} \right|^3\right]\\
&\leq \frac{C}{\sqrt{N}} \to 0, \qquad \text{as $N\to \infty$},
\end{align*}
that is the \textit{Lyapunov condition} is satisfied and therefore 
 $N\rightarrow\infty$
 \[
 \frac{1}{\sqrt{N}}\sum_{i=1}^{N}\left(C_{i,N}-\frac{1}{\rho_{N}}\right)\overset{d}{\rightarrow}\mathcal{N}\left(0,s^{2}\right),
 \]
 where $s^{2}=\lim_{N\rightarrow\infty}s_{N}^{2}=(1-\rho)/\rho^{2}.$
 \end{proof}

\begin{prop}
For $N\in\mathbb{N}$ let $C_{1,N},\ldots,C_{N,N}$ denote independent
samples from a geometric distribution with success probability $\rho_N$,
then the minimum variance unbiased estimator for $\rho_N$ is
\begin{equation}
\hat{\rho}_{N}^{\mathrm{mvue}}=\frac{N-1}{\sum_{k=1}^{N}C_{k,N}-1}.
\end{equation}
\end{prop}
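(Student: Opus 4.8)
The plan is to prove this via the Lehmann--Scheffé theorem: I will exhibit $\hat\rho_N^{\mathrm{mvue}}$ as an unbiased estimator of $\rho_N$ that depends on the data only through a complete sufficient statistic, which then forces it to be the essentially unique minimum variance unbiased estimator. Throughout I take the $C_{i,N}$ to be supported on $\{1,2,\dots\}$ with $\mathbb{P}(C_{i,N}=k)=(1-\rho_N)^{k-1}\rho_N$ --- the parametrization for which $\mathbb{E}[C_{i,N}]=1/\rho_N$, as used in the main text --- and I assume $N\ge 2$ so that the estimator is well defined.

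First I would record that $S_N:=\sum_{k=1}^N C_{k,N}$ is a complete sufficient statistic for $\rho_N$. Sufficiency and completeness are immediate from the exponential-family form $\mathbb{P}(C_{i,N}=k)=\rho_N(1-\rho_N)^{-1}\exp\{k\log(1-\rho_N)\}$, since the natural parameter $\log(1-\rho_N)$ ranges over the open set $(-\infty,0)$; alternatively, one notes that $S_N$ has the negative binomial law $\mathbb{P}(S_N=s)=\binom{s-1}{N-1}\rho_N^{N}(1-\rho_N)^{s-N}$ for $s\ge N$, and completeness follows because if $\sum_{s\ge N}g(s)\binom{s-1}{N-1}(1-\rho_N)^{s-N}\equiv 0$ on an interval, then all power-series coefficients in $1-\rho_N$ must vanish, so $g\equiv 0$. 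Next I would verify unbiasedness: using the identity $\tfrac{N-1}{s-1}\binom{s-1}{N-1}=\binom{s-2}{N-2}$ (valid for $s\ge N\ge 2$) and then the generalized binomial series $\sum_{j\ge 0}\binom{N-2+j}{j}x^{j}=(1-x)^{-(N-1)}$ with $x=1-\rho_N$,
\[
\mathbb{E}\!\left[\frac{N-1}{S_N-1}\right]=\rho_N^{N}\sum_{s=N}^{\infty}\binom{s-2}{N-2}(1-\rho_N)^{s-N}=\rho_N^{N}\sum_{j=0}^{\infty}\binom{N-2+j}{j}(1-\rho_N)^{j}=\rho_N^{N}\,\rho_N^{-(N-1)}=\rho_N .
\]

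Finally, since $\hat\rho_N^{\mathrm{mvue}}=(N-1)/(S_N-1)$ is unbiased for $\rho_N$ and is a function of the complete sufficient statistic $S_N$, Lehmann--Scheffé yields that it is the (a.s.\ unique) minimum variance unbiased estimator, which completes the proof. There is no real obstacle here: the argument is a textbook application of completeness plus Lehmann--Scheffé, and the only genuine computation is the negative binomial sum above, which is routine. The two points to be careful about are that one needs $N\ge 2$ (so that $N-1\neq 0$ and $S_N-1\ge 1>0$) and that the geometric convention must be the ``number of trials until the first success'' one, for which the stated closed form is exact.
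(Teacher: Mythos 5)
Your proof is correct and rests on the same foundation as the paper's: Lehmann--Scheff\'e applied to the complete sufficient statistic $S_N=\sum_{k}C_{k,N}$, under the ``number of trials until first success'' convention (which, as you rightly flag, is the one consistent with $\mathbb{E}[C_{i,N}]=1/\rho_N$). The only difference is in how unbiasedness is supplied: the paper Rao--Blackwellizes the trivial unbiased estimator $1\{C_{1,N}=1\}$ and computes $\mathbb{E}\bigl[1\{C_{1,N}=1\}\mid S_N=n\bigr]=(N-1)/(n-1)$, thereby \emph{deriving} the closed form, whereas you take the closed form as given and verify $\mathbb{E}[(N-1)/(S_N-1)]=\rho_N$ directly via the negative binomial series; both computations are routine and equivalent in content, and you additionally spell out the completeness argument that the paper only asserts.
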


\begin{proof}
This is a straightforward application of the Lehmann-Scheff\'e Theorem \autocite[Theorem 7.4.1]{hogg2005introduction}. Take the unbiased estimator $1\left\{C_{1,N} = 1\right\}$, where $1\{A\}$ denotes the indicator function of the set $A$, and condition on the complete and sufficient statistic (of the coin flips) $\sum_{k=1}^N C_{k,N}$. A straightforward calculation yields
\begin{align*}
	\mathbb{E}\left[1\left\{C_{1,N} = 1\right\} \mid \sum_{k=1}^N C_{k,N} = n \right] & = \frac{\mathbb{P}\left(C_1 = 1, \sum_{k=2}^N C_{k,N} = n-1 \right)}{\mathbb{P}\left(\sum_{k=1}^N C_{k,N} = n\right)} \\
	& = \frac{p\binom{n-2}{N-2}p^{N-1}(1-p)^{n - N}}{\binom{n-1}{N-1}p^{N}(1-p)^{n - N}} \\
	& = \frac{N-1}{n-1}
\end{align*} 
and thus 
\begin{equation*}
\hat{\rho}_{N}^{\mathrm{mvue}} = \mathbb{E}\left[1\left\{C_{1,N} = 1\right\} \mid \sum_{k=1}^N C_{k,N} \right] = \frac{N-1}{\sum_{k=1}^N C_{k,N}-1}. \qedhere
\end{equation*}
\end{proof}

\begin{prop}
For $N\in\mathbb{N}$ let $C_{1,N},C_{2,N},\ldots$ denote independent
samples from a Geometric distribution with success probability $\rho_N$,
then 
\[
\sqrt{N}\left(\hat{\rho}_{N}^{\mathrm{mvue}}-\rho_{N}\right)\overset{d}{\rightarrow}\mathcal{N}\left(0,\left(1-\rho\right)\rho^{2}\right).
\]
\end{prop}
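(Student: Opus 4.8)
The plan is to deduce this central limit theorem from Proposition~\ref{prop:clt1} by the delta method applied to the map $g(x)=1/x$, followed by a Slutsky-type argument to dispose of the $O(N^{-1})$ discrepancy between $\hat\rho_N^{\mathrm{mvue}}$ and $1/\bar C_N$, where I write $\bar C_N = N^{-1}\sum_{k=1}^N C_{k,N}$. Note first that $C_{k,N}\geq 1$ almost surely, so $\bar C_N\geq 1$ and $N\bar C_N-1\geq N-1$; hence $g$ is only ever evaluated on $[1,\infty)$, where it is smooth, and no care is needed about the denominators vanishing.

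By Proposition~\ref{prop:clt1}, $\sqrt N\bigl(\bar C_N - 1/\rho_N\bigr)\overset{d}{\rightarrow}\mathcal N\bigl(0,(1-\rho)/\rho^2\bigr)$; in particular $\bar C_N-1/\rho_N\overset{p}{\rightarrow}0$, so $\bar C_N\overset{p}{\rightarrow}1/\rho$ since $1/\rho_N\to 1/\rho$. A mean value expansion $1/\bar C_N - \rho_N = g(\bar C_N)-g(1/\rho_N) = g'(\xi_N)\bigl(\bar C_N-1/\rho_N\bigr)$ with $\xi_N$ between $\bar C_N$ and $1/\rho_N$ gives $\xi_N\overset{p}{\rightarrow}1/\rho$, hence $g'(\xi_N)\overset{p}{\rightarrow}g'(1/\rho)=-\rho^2$; Slutsky's theorem then yields
\[
\sqrt N\Bigl(\tfrac{1}{\bar C_N}-\rho_N\Bigr)\overset{d}{\rightarrow}\mathcal N\Bigl(0,\rho^4\cdot\tfrac{1-\rho}{\rho^2}\Bigr)=\mathcal N\bigl(0,(1-\rho)\rho^2\bigr).
\]
It then remains to check that $\hat\rho_N^{\mathrm{mvue}}$ and $1/\bar C_N$ agree to order $o_P(N^{-1/2})$. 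The identity
\[
\hat\rho_N^{\mathrm{mvue}}-\frac{1}{\bar C_N}=\frac{N-1}{N\bar C_N-1}-\frac{1}{\bar C_N}=\frac{1-\bar C_N}{\bar C_N\,(N\bar C_N-1)}
\]
shows that, after multiplying by $\sqrt N$, the right-hand side equals $(1-\bar C_N)\big/\bigl(\sqrt N\,(\bar C_N^2-\bar C_N/N)\bigr)$, and since $1-\bar C_N\overset{p}{\rightarrow}1-1/\rho$ and $\bar C_N^2-\bar C_N/N\overset{p}{\rightarrow}1/\rho^2>0$, this tends to $0$ in probability. Combining the two displays by Slutsky's theorem gives $\sqrt N\bigl(\hat\rho_N^{\mathrm{mvue}}-\rho_N\bigr)\overset{d}{\rightarrow}\mathcal N\bigl(0,(1-\rho)\rho^2\bigr)$.

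The only real subtlety is that $\rho_N$ depends on $N$, so the textbook delta method does not apply verbatim; the step requiring attention is therefore the mean value expansion above together with the claim that $\xi_N$ concentrates at $1/\rho$, which is however immediate from $\bar C_N\overset{p}{\rightarrow}1/\rho$ and $1/\rho_N\to 1/\rho$. Everything else — the algebraic identity and the convergences in probability — is routine and relies only on facts already contained in Proposition~\ref{prop:clt1}.
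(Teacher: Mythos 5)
Your proof is correct and follows essentially the same route as the paper's: Proposition~3 gives the CLT for $\bar C_N$, the delta method with $g(x)=1/x$ transfers it to the reciprocal, and the $O_P(1/N)$ discrepancy with $\hat\rho_N^{\mathrm{mvue}}$ is removed by Slutsky. The only differences are cosmetic --- you apply the delta method before rather than after accounting for the $(N-1)/(N\bar C_N-1)$ adjustment, and you spell out the mean-value form of the delta method to handle the $N$-dependence of $\rho_N$, a point the paper glosses over.
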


 \begin{proof}
 By Proposition \ref{prop:clt1} we have convergence 
 \[
 \sqrt{N}\left(\frac{1}{N}\sum_{i=1}^{N}C_{i,N}-\frac{1}{\rho_{N}}\right)\overset{d}{\rightarrow}\mathcal{N}\left(0,\frac{1-\rho}{\rho^{2}}\right).
 \]
 In addition, 
 \begin{align*}
 \left|\sqrt{N}\left(\frac{\sum_{i=1}^{N}C_{i,N}-1}{N-1}-\frac{1}{\rho_{N}}\right)-\sqrt{N}\left(\frac{1}{N}\sum_{i=1}^{N}C_{i,N}-\frac{1}{\rho_{N}}\right)\right| & \rightarrow 0
 \end{align*}
 almost surely, so 
 \[
 \sqrt{N}\left(\frac{\sum_{i=1}^{N}C_{i,N}-1}{N-1}-\frac{1}{\rho_{N}}\right)\overset{d}{\rightarrow}\mathcal{N}\left(0,\frac{1-\rho}{\rho^{2}}\right).
 \]

 By the $\delta$-Method with $g(x)=1/x$, $|g'(x)|=1/x^{2}$ we obtain
 \begin{align*}
 \sqrt{N}\left(g\left(\frac{\sum_{i=1}^{N}C_{i,N}-1}{N-1}\right)-g\left(\frac{1}{\rho_{N}}\right)\right) & =\sqrt{N}\left(\frac{N-1}{\sum C_{i,N}-1}-\rho_{N}\right)\\
  & \overset{d}{\rightarrow}\mathcal{N}\left(0,\frac{1-\rho}{\rho^{2}}\cdot\rho^{4}\right)=\mathcal{N}\left(0,\left(1-\rho\right)\rho^{2}\right).\qedhere
 \end{align*}
 \end{proof}

\begin{thm}
The estimator 
\begin{equation*}
\hat{\rho}_{N,T}=\prod_{t=1}^{T}\frac{1}{N}\sum_{k=1}^{N}c_{t}^{k}\cdot\frac{N-1}{\sum_{k=1}^{N}C_{k,N}-1}.
\end{equation*}
is unbiased for $p(y_{1:T})$, i.e.
\begin{equation*}
	\mathbb{E}\left[\hat{\rho}_{N,T}\right] = p(y_{1:T}).
\end{equation*}
\end{thm}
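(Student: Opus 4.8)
The plan is to prove unbiasedness by conditioning and peeling off time steps one at a time, using the tower property together with the fact (established in the excerpt) that $\hat{\rho}_N^{\mathrm{mvue}} = \tfrac{N-1}{\sum_k C_{k,N}-1}$ is conditionally unbiased for the Bernoulli-race stopping probability $\rho_{t,N}$. First I would fix the notation: at each time $t$ the Bernoulli-race resampling produces i.i.d.\ geometric counts $C_{1,N}^{(t)},\dots,C_{N,N}^{(t)}$ whose success probability, conditional on the particle positions and weights at time $t$, is
\[
\rho_{t,N} = \frac{\sum_{k=1}^N c_{t,k} b_{t,k}}{\sum_{k=1}^N c_{t,k}} = \frac{\frac1N\sum_{k=1}^N w_{t,k}}{\frac1N\sum_{k=1}^N c_{t,k}},
\]
so that the $t$-th factor of $\hat{\rho}_{N,T}$ is $\big(\tfrac1N\sum_k c_{t,k}\big)\cdot \hat{\rho}_{t,N}^{\mathrm{mvue}}$, and by the conditional unbiasedness of the MVUE its conditional expectation given everything up to and including the weights at time $t$ equals $\big(\tfrac1N\sum_k c_{t,k}\big)\rho_{t,N} = \tfrac1N\sum_k w_{t,k}$. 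The key structural point is that the auxiliary geometric variables at time $t$ are generated \emph{after} the weights $w_{t,k}$ are computed and, crucially, the resampled indices used to build time $t+1$ are conditionally independent of the $C^{(t)}_{k,N}$ given the accepted outcomes — so the geometric counts can be integrated out without disturbing the rest of the filter.

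The heart of the argument is a backward induction on $T$. Let $\mathcal{F}_t$ denote the $\sigma$-algebra generated by all randomness in the BRPF up to and including the computation of the weights $w_{t,k}$ (i.e.\ the particle genealogies, proposals, and coin-flip estimators needed for the weights), but \emph{not} the extra geometric draws $C^{(t)}_{k,N}$ that only serve to estimate the normalizing constant. Write $\hat{\rho}_{N,T} = \hat{\rho}_{N,T-1}\cdot \big(\tfrac1N\sum_k c_{T,k}\big)\hat{\rho}_{T,N}^{\mathrm{mvue}}$. Conditioning on $\mathcal{F}_T$, the first two factors are $\mathcal{F}_T$-measurable, and
\[
\mathbb{E}\big[\hat{\rho}_{N,T}\mid \mathcal{F}_T\big] = \hat{\rho}_{N,T-1}\cdot\Big(\tfrac1N\textstyle\sum_k c_{T,k}\Big)\rho_{T,N} = \hat{\rho}_{N,T-1}\cdot\tfrac1N\textstyle\sum_k w_{T,k}.
\]
Taking expectations, $\mathbb{E}[\hat{\rho}_{N,T}] = \mathbb{E}\big[\hat{\rho}_{N,T-1}\cdot\tfrac1N\sum_k w_{T,k}\big]$, and the right-hand side is exactly the quantity that the standard PF normalizing-constant estimator $\hat p(y_{1:T}) = \prod_{t=1}^T \tfrac1N\sum_k w_{t,k}$ would produce — that is, $\mathbb{E}[\hat{\rho}_{N,T}]$ equals the expectation of the standard (exact-weight) product estimator $\hat p(y_{1:T})$ run on the same filter. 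Since the BRPF resamples exactly according to the multinomial weights $w_{t,k}$ (Proposition~\ref{prop:2}), the particle system it generates has the same law as an exact-weight PF, so $\mathbb{E}[\hat p(y_{1:T})] = p(y_{1:T})$ by the classical unbiasedness result \autocite[][Chapter 7]{del2004feynman}. Composing these gives $\mathbb{E}[\hat{\rho}_{N,T}] = p(y_{1:T})$.

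The main obstacle is justifying the conditioning carefully: one must verify that the geometric counts $C^{(t)}_{k,N}$ at time $t$ are conditionally independent of all subsequent filter randomness given $\mathcal{F}_t$ and the accepted resampling outcomes, so that replacing the $t$-th factor by its conditional mean is legitimate and does not bias the later factors. This is where the precise description of the algorithm matters — the geometric "number of trials until acceptance" at time $t$ is, conditionally on the weights, a function only of fresh coin flips, while the resampled indices fed into time $t+1$ depend only on \emph{which} proposal was eventually accepted; these two are conditionally independent for a race with i.i.d.\ restarts. Once that independence is nailed down, the rest is the routine tower-property bookkeeping sketched above, invoking Proposition~\ref{prop:2} for the correctness of the resampling law and the MVUE proposition for the conditional unbiasedness of each factor. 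I would also note the harmless technical point that $\sum_k C^{(t)}_{k,N}\ge N+1$ almost surely (each $C\ge 1$, and at least one restart is impossible to force below $N$... in fact $\sum_k C_{k,N} \ge N$ with the MVUE requiring $\sum_k C_{k,N}>1$, which holds once $N\ge 2$), so the estimator is almost surely well-defined.
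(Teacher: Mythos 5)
Your proposal is correct and follows essentially the same route as the paper: condition on everything up to and including the time-$t$ weights, use the conditional unbiasedness of $\frac{N-1}{\sum_k C_{k,N}-1}$ for $\rho_{t,N}$ to replace each factor $\frac{1}{N}\sum_k c_{t,k}\cdot\hat{\rho}_{t,N}^{\mathrm{mvue}}$ by $\frac{1}{N}\sum_k w_{t,k}$, and peel off time steps by backward induction, with the conditional independence of the geometric counts from the accepted indices (a property of rejection samplers with i.i.d.\ restarts) doing the real work. The only difference is in the finish: the paper re-derives the classical unbiasedness of the exact-weight product estimator explicitly (writing out the backward induction in the bootstrap case), whereas you identify the law of the BRPF particle system with that of an exact-weight PF via Proposition~2 and then cite the classical result directly --- a more modular ending that correctly isolates the one genuinely new ingredient.
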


\begin{proof}
The main argument is that the standard proof for unbiasedness using backward induction can be adapted to include our estimator using the geometric random variables $C_{1,N}, \ldots, C_{N,N}$. 
Denote by $X, \overline{X}$ the random variables before and after resampling respectively. 
Then we have
\begin{align*}
& \mathbb{E}\left[ \frac{1}{N} \sum_{i=1}^N w_T^k \mid {X}_{T-1}^{1:N}\right] \\
& = \mathbb{E}\left[\frac{1}{N} \sum_{i=1}^N g(y_{T}\mid {X}_{T}^i ) \mid {X}_{T-1}^{1:N}\right] \\
& = \frac{1}{N} \sum_{i=1}^N \int g(y_{T}\mid {x}_{T}^i ) p(x_T^i \mid {X}_{T-1}^{1:N}) dx_T^i\\
& = \frac{1}{N} \sum_{i=1}^N\int  g(y_T \mid x_{T}^i) \sum_{j=1}^N f(x_{T}^i\mid {X}_{T-1}^j) \frac{g(y_{T-1} \mid {X}_{T-1}^j)}{\sum_{k=1}^Ng(y_{T-1} \mid {X}_{T-1}^k)} d{x}_{T}^i\\
& = \frac{1}{N} \sum_{i=1}^N\sum_{j=1}^N \int  g(y_T \mid x_{T}^i) f(x_{T}^i\mid {X}_{T-1}^j) \frac{g(y_{T-1} \mid {X}_{T-1}^j)}{\sum_{k=1}^Ng(y_{T-1} \mid {X}_{T-1}^k)} d{x}_{T}^i\\
& = \frac{1}{N} \sum_{i=1}^N\sum_{j=1}^N p(y_T \mid X_{T-1}^j) \frac{g(y_{T-1} \mid {X}_{T-1}^j)}{\sum_{k=1}^Ng(y_{T-1} \mid {X}_{T-1}^k)} \\
& = \frac{\sum_{j=1}^N p(y_{T-1:T} \mid {X}_{T-1}^j)}{\sum_{k=1}^N g(y_{T-1} \mid {X}_{T-1}^k)}.
\end{align*}
Now, considering the case where we estimate the weights using the geometric random variables, we have
\begin{align*}
& \mathbb{E}\left[\frac{1}{N}\sum_{k=1}^{N}c_{T}^{k}\cdot\frac{N-1}{\sum_{k=1}^{N}C_{k,N, T}-1}\mid X_{T-1}^{1:N}\right] \\
& = \mathbb{E}\left[\frac{1}{N}\sum_{k=1}^{N}c_{T}^{k}\mathbb{E}\left\{\frac{N-1}{\sum_{k=1}^{N}C_{k,N, T}-1}\mid {X}_T^{1:N}, \overline{X}_{T-1}^{1:N}\right\}\mid X_{T-1}^{1:N}\right] \\
& = \mathbb{E}\left[\frac{1}{N}\sum_{k=1}^{N}c_{T}^{k}
\frac{\sum_{k=1}^{N}w_{T}^{k}}{\sum_{k=1}^{N}c_{T}^{k}}
\mid X_{T-1}^{1:N}\right] \\
& = \mathbb{E}\left[\frac{1}{N}\sum_{k=1}^N w_T^k\mid X_{T-1}^{1:N}\right] \\
& = \frac{\sum_{k=1}^N p(y_{T-1:T} \mid x_{T-1}^k)}{\sum_{k=1}^N g(y_{T-1} \mid x_{T-1}^k)}
\end{align*}
and 
\begin{align*}
& \mathbb{E}\left[\left(\frac{1}{N}\sum_{j=1}^{N}c_{T-1}^{j}\cdot\frac{N-1}{\sum_{k=1}^{N}C_{k,N, T-1}-1}\right) \cdot \left(\frac{1}{N}\sum_{j=1}^{N}c_{T}^{j}\cdot\frac{N-1}{\sum_{k=1}^{N}C_{k,N, T}-1}\right) \mid X_{T-2}^{1:N} \right] \\
& = \mathbb{E}\left[\mathbb{E}\left[\left(\frac{1}{N}\sum_{j=1}^{N}c_{T-1}^{j}\cdot\frac{N-1}{\sum_{k=1}^{N}C_{k,N, T-1}-1}\right) \cdot \left(\frac{1}{N}\sum_{j=1}^{N}c_{T}^{j}\cdot\frac{N-1}{\sum_{k=1}^{N}C_{k,N, T}-1}\right) \mid X_{T-2:T-1}^{1:N}, \overline{X}_{T-2}^{1:N}, C_{1:N, N, T-1}\right] \mid X_{T-2}^{1:N} \right] \\
& = \mathbb{E}\left[\left(\frac{1}{N}\sum_{j=1}^{N}c_{T-1}^{j}\cdot\frac{N-1}{\sum_{k=1}^{N}C_{k,N, T-1}-1}\right)\mathbb{E}\left[\left(\frac{1}{N}\sum_{j=1}^{N}c_{T}^{j}\cdot\frac{N-1}{\sum_{k=1}^{N}C_{k,N, T}-1}\right) \mid X_{T-2:T-1}^{1:N}, \overline{X}_{T-2}^{1:N}, C_{1:N, N, T-1}\right] \mid X_{T-2}^{1:N} \right] \\
& = \mathbb{E}\left[\left(\frac{1}{N}\sum_{j=1}^{N}c_{T-1}^{j}\cdot\frac{N-1}{\sum_{k=1}^{N}C_{k,N, T-1}-1}\right)\frac{\sum_{k=1}^N p(y_{T-1:T} \mid X_{T-1}^k)}{\sum_{k=1}^N g(y_{T-1} \mid X_{T-1}^k)} \mid X_{T-2}^{1:N} \right] \\
& = \mathbb{E}\left[\mathbb{E}\left[\left(\frac{1}{N}\sum_{j=1}^{N}c_{T-1}^{j}\cdot\frac{N-1}{\sum_{k=1}^{N}C_{k,N, T-1}-1}\right)\mid X_{T-1}^k, \overline{X}_{T-2}^k\right]\frac{\sum_{k=1}^N p(y_{T-1:T} \mid X_{T-1}^k)}{\sum_{k=1}^N g(y_{T-1} \mid X_{T-1}^k)}  \mid X_{T-2}^{1:N} \right] \\
& = \mathbb{E}\left[\left(\frac{1}{N}\sum_{k=1}^{N}w_{T-1}^{k}\right) \cdot\frac{\sum_{k=1}^N p(y_{T-1:T} \mid X_{T-1}^k)}{\sum_{k=1}^N g(y_{T-1} \mid X_{T-1}^k)} \mid X_{T-2}^{1:N}\right].
\end{align*}

For the final expectation, we can calculate
\begin{align*}
&\mathbb{E}\left[\frac{1}{N} \sum_{k=1}^N w_{T-1}^k \cdot \frac{\sum_{k=1}^N p(y_{T-1:T} \mid X_{T-1}^k)}{\sum_{k=1}^N g(y_{T-1} \mid X_{T-1}^k)}\mid X_{T-2}^{1:N}\right] \\
& = \mathbb{E}\left[\frac{1}{N}\sum_{k=1}^N g(y_{T-1} \mid X_{T-1}^k)\frac{\sum_{k=1}^N p(y_{T-1:T} \mid X_{T-1}^k)}{\sum_{k=1}^N g(y_{T-1} \mid X_{T-1}^k)}\mid X_{T-2}^{1:N}\right] \\
& = \mathbb{E}\left[\frac{1}{N}\sum_{k=1}^N p(y_{T-1:T} \mid X_{T-1}^k)\mid X_{T-2}^{1:N}\right] \\
& = \frac{1}{N} \sum_{k=1}^N\int p(y_{T-1:T} \mid x_{T-1}^k) p(x_{T-1}^k \mid X_{T-2}^{1:N}) d x_{T-1}^k \\
& = \frac{1}{N} \sum_{k=1}^N\int p(y_{T-1:T} \mid x_{T-1}^k) \frac{\sum_{j=1}^N g(y_{T-2} \mid X_{T-2}^j) f(x_{T-1}^k \mid X_{T-2}^j)}{\sum_{j=1}^N g(y_{T-2} \mid X_{T-2}^j)}d x_{T-1}^k \\
& = \sum_{j=1}^N p(y_{T-1:T} \mid X_{T-2}^j) \frac{g(y_{T-2} \mid X_{T-2}^j) }{\sum_{k=1}^N g(y_{T-2} \mid X_{T-2}^k)} = \frac{\sum_{j=1}^N p(y_{T-2:T} \mid X_{T-2}^j)}{\sum_{j=1}^Ng(y_{T-2} \mid X_{T-2}^j)}. 
\end{align*}
Repeated application of these steps yields
\begin{align*}
\mathbb{E}\left[\prod_{t=1}^{T}\frac{1}{N}\sum_{k=1}^{N}c_{t}^{k}\cdot\frac{N-1}{\sum_{k=1}^{N}C_{k,N}-1}\right] & = \mathbb{E}\left[\frac{1}{N} \sum_{k=1}^N g(y_1 \mid X_1^k) \frac{\sum_{j=1}^N p(y_{1:T} \mid X_{1}^j)}{\sum_{j=1}^N g(y_1 \mid X_{1}^j)}\right]\\
& = p(y_{1:T}).\qedhere
\end{align*}
\end{proof}
 
\section{Further Details to the Applications}

\subsection{Locally optimal proposal: run-time}

We conjecture that the advantage of the BRPF over the RWPF grows as the state transition get computationally more expensive. We will demonstrate that on a toy example.

First note that in case of the Gaussian state space model, the locally optimal proposal,
\begin{align*}
q^*(x_{t}\mid y_{t},x_{t-1}) & \propto g(y_{t}\mid x_{t})f(x_{t}\mid x_{t-1})\\
 & \propto\varphi\left(x_{t};\frac{1}{2}(ax_{t-1}+y_{t}),2.5^{2}\right)
\end{align*}
is known analytically and is Gaussian. Therefore, in this special case the rejection sampler can be avoided. 
This will not usually be the case, hence we use the rejection sampler for our simulation. However, this will 
significantly speed up the state transition. In \autoref{fig:test} we compare the run-time of both, the RWPF and BRPF for 
different numbers of particles. In scenario (a) we use the cheap transition. Here the RWPF is very efficient. However,
in scenario (b), which we also show in the main paper, where sampling from the transition is more expensive, we can implement the BRPF with the same run-time as the RWPF.

\begin{figure}
\centering
\begin{subfigure}{.5\textwidth}
  \centering
  \includegraphics[width=.9\linewidth]{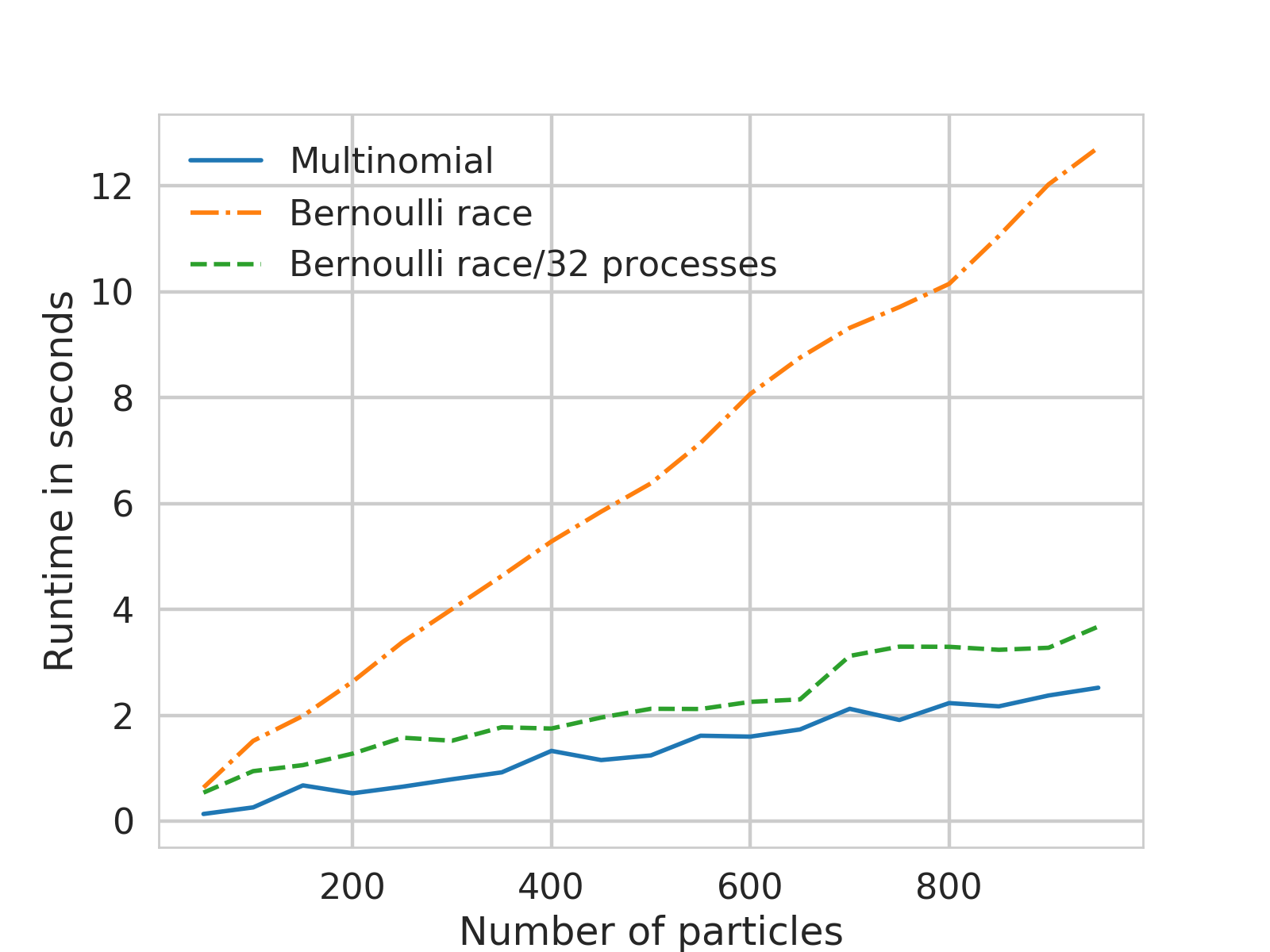}
  \caption{Simulating from known Gaussian.}
  \label{fig:sub1}
\end{subfigure}%
\begin{subfigure}{.5\textwidth}
  \centering
  \includegraphics[width=.9\linewidth]{kf_complexity_rej.png}
  \caption{Rejection Sampler.}
  \label{fig:sub2}
\end{subfigure}
\caption{Comparing the run-time for different implementation for the proposal $q(\cdot \mid x, y)$. In case (a) the Gaussian proposal is sampled using a na\"ive implementation. In (b) we use a rejection sampler proposing from the state transition.}
\label{fig:test}
\end{figure}

\subsection{Sine diffusion: run-time}

Here we compare the run-time for the RWPF and BRPF for the sine diffusion state space model. As pointed out in the main
text, the BRPF is slower when implemented sequentially, but we observe in \autoref{fig:sine_comp} that the difference almost
completely vanishes when use a parallel implementation with 16 processes.

\begin{figure}
\centering
\includegraphics[width=0.5\linewidth]{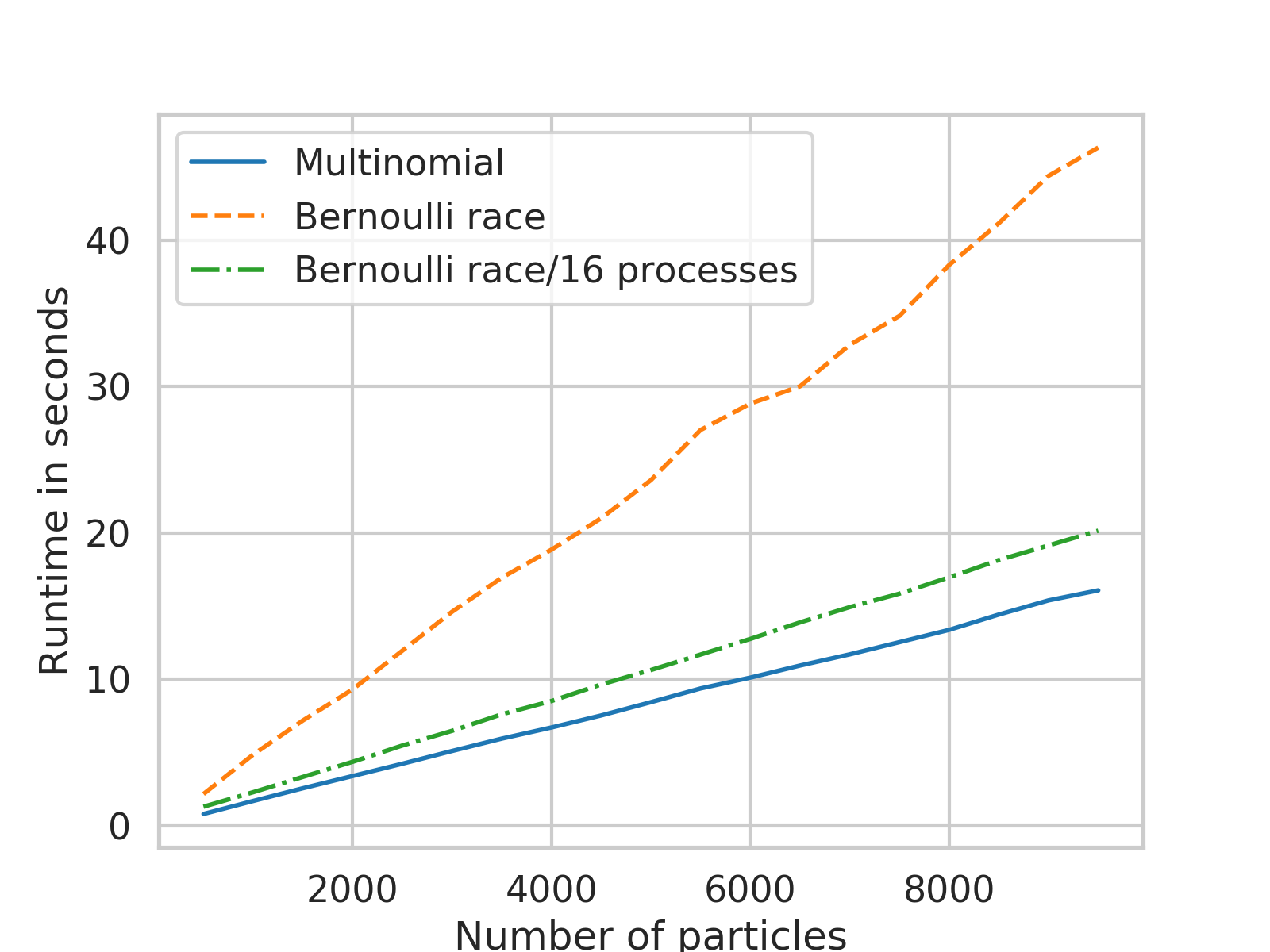}
\caption{Comparison of the run-time for RWPF and BRPF for sine diffusion state space model. For BRPF we show a sequential 
and a parallel implementation with 16 cores. For all algorithms we show wall-clock time for number of particles $N$. }
\label{fig:sine_comp}
\end{figure}

\section{Cox Process inference}

To further demonstrate the range of possible application and for further illustration of the Bernoulli race particle filter, we consider another application of the BRPF to a Cox process whose
intensity function is governed by a Gaussian process that is normalized through a sigmoid function. The underlying Gaussian process is modelled as a Gauss--Markov process, allowing us to perform inference sequentially as done by \textcite{li2018sequential} based on models of \textcite{christensen2012forecasting, adams2009tractable}.

\subsection{Likelihood and Estimation}

We assume the latent Gaussian process to evolve according to the Ornstein--Uhlenbeck process
\begin{equation}\label{eq:ou_process}
dX_{t}=AX_{t}dt+h dB_{t} \qquad t \in[0, \mathcal{T}],
\end{equation}
where $(B_t)_{t\in[0,\mathcal{T}]}$ denotes a Brownian motion from $0$ to $\mathcal{T}$. We will describe the structural assumptions on $A$ and $h$ in more detail below.
Denote the observed data by $s_{1:n}\subset [0, \mathcal{T}]$, where $n$ is the number of observations. We model the data by a Cox process with intensity function 
\begin{equation*}
\lambda(t)=\sigma(X_{1, t})=\frac{\lambda_\mathrm{max}}{1+\exp(-X_{1, t})}=\lambda_\mathrm{max}\frac{\exp(X_{1, t})}{1+\exp(X_{1, t})},
\end{equation*}
where $(X_{1, t})_{t\in[0, \mathcal{T}]}$ is the first coordinate of the SDE \eqref{eq:ou_process} introduced before. The likelihood function of the Cox process conditional on the intensity is Poisson with 
\begin{equation}\label{eq:poisson_likelihood}
p(s_{1:n}\mid\lambda,\mathcal{T})=\exp\left(-\int_0^{\mathcal{T}}\lambda(s)ds\right)\prod_{i=1}^{n}\lambda(s_{i}).
\end{equation}
The likelihood is clearly intractable due to the integral which involves the latent Gaussian process. To implement a particle filter for sequential inference we need to find an unbiased coin with probability proportional to the integral in \eqref{eq:poisson_likelihood}.
For any interval $[t_0,t_1] \subset [0, \mathcal{T}]$ we can find an estimator for $\exp(-\int_{t_0}^{t_1}\lambda(s)ds)$
by observing that for $U\sim \mathrm{Unif}[t_0,t_1]$ and a fixed process
$X = (X_{t})_{t\in[t_0,t_1]}$ we have
\begin{align*}
\mathbb{E}\left[\lambda(U) \mid X = x\right] & = \int_{t_0}^{t_1}\frac{\lambda(s)}{t_1-t_0}ds \\
& = \frac{1}{t_0-t_1}\int_{t_0}^{t_1}\frac{1}{1 + \exp(-x_s)}ds
\end{align*}
and hence, if $V\sim \mathrm{Unif}[0, 1]$ independent of all other random variables, we obtain
\[
P\left(V\leq\lambda(U) \mid X=x\right)=\int_{t_0}^{t_1}\frac{\lambda(s)}{t_1-t_0}ds.
\]
Now sample $K\sim \mathrm{Pois}(\lambda_\mathrm{max}(t_1-t_0))$, then an unbiased coin flip can be generated using
\begin{equation*}
	Z = \prod_{i=1}^{K}1\left\{ V_{i}\leq \frac{\lambda_\mathrm{max} - \lambda(U_i)}{\lambda_\mathrm{max}}\right\},
\end{equation*}
where $1\{A\}$ denotes the indicator function of the set $A$. This estimator is indeed unbiased as can be seen by
\begin{align*}
 & \mathbb{E}\left[Z\right] = \mathbb{E}\left[\prod_{i=1}^{K}1\left\{ V_{i}\leq \frac{\lambda_\mathrm{max} - \lambda(U_i)}{\lambda_\mathrm{max}}\right\} \mid X=x\right] \\
 & =e^{-\lambda_\mathrm{max}(t_1-t_0)}\sum_{k=0}^{\infty}\frac{\left(\lambda_\mathrm{max}(t_1-t_0)\right)^{k}}{k!}\prod_{i=1}^{k}\mathbb{E}\left[1\left\{ V_{i}\leq \frac{\lambda_\mathrm{max}-\lambda(U_i)}{\lambda_\mathrm{max}}\right\} \mid K=k,X=x\right]\\
 & =e^{-\lambda_\mathrm{max}(t_1-t_0)}\sum_{k=0}^{\infty}\frac{\lambda_\mathrm{max}^k\left(t_1-t_0\right)^{k}}{k!}\left(\int_{t_0}^{t_1}\frac{\lambda_\mathrm{max}-\lambda(s)}{\lambda_\mathrm{max}(t_1-t_0)}ds\right)^{k}\\
 & =e^{-\lambda_\mathrm{max}(t_1-t_0)}\sum_{k=0}^{\infty}\frac{\left(\int_{t_0}^{t_1}\lambda_\mathrm{max}-\lambda(s)ds\right)^{k}}{k!\lambda_\mathrm{max}^k}\\
 & =e^{-\lambda_\mathrm{max}(t_1-t_0)}e^{\lambda_\mathrm{max}(t_1-t_0)}\exp\left(-\int_{t_0}^{t_1}\lambda(s)ds\right)\\
 & =\exp\left(-\int_{t_0}^{t_1}\lambda(s)ds\right).
\end{align*}
To implement the Bernoulli race particle filter we sample from the prior, which can be done exactly, as seen in the next section.
As we are sampling from the prior the particle weights just depend on the observation likelihood. Suppose the particle filter is at current time $t_0$ and we propose to move to $t_1 = t_0 + \Delta t$. Let $(X_{1, t_1}^k), k=1,\ldots, N$ denote a proposed particle path. Then the weight is given by 
\begin{equation*}
	g(s_{i\colon s_i \in [t_0, t_1]} \mid X^k) = \exp\left(-\int_{t_0}^{t_1} \lambda^k(s) ds \right) \prod_{i\colon s_i \in [t_0, t_1]} \lambda(s^k_i),
\end{equation*}
where 
\begin{equation*}
	\lambda^k(s) = \frac{\exp(X_{1, s}^k)}{1 + \exp(X_{1, s}^k)}.	
\end{equation*}
In the notation of Section 4 we have
\begin{align*}
	c_{t_1}^k &= \prod_{i\colon s_i \in [t_0, t_1]} \lambda^k(s_i), \\
	b_{t_1}^k &= \exp\left(-\int_{t_0}^{t_1} \lambda^k(s) ds\right).
\end{align*}
To implement the estimator $\hat{Z}$, we need to be able to simulate a bridge from the stochastic differential equation \eqref{eq:ou_process}. Since the process at hand is analytically tractable, exact sampling from the bridge is straightforward, see e.g. \textcite{bladt2016simulation} using the quantities computed below.  

\subsection{Ornstein--Uhlenbeck Prior}

In this section we provide further details on the assumptions on the Gaussian process and show how we can sample the state transition.
We assume the prior distribution on the latent space is given by the Ornstein--Uhlenbeck process \eqref{eq:ou_process}, that is
\[
dX_{t}=AX_{t}dt+ h dB_{t} \qquad t \in [0, \mathcal{T}]. 
\]
For $A$ and $h$ we take the values
\begin{align*}
A & =\left[\begin{array}{cc}
0 & 1\\
0 & \theta
\end{array}\right], \quad 
h =\left[\begin{array}{c}
0\\
\sigma
\end{array}\right]
\end{align*}
where $\theta$ is a negative real value.
The process can be written as a differential equation with random velocity following a mean reverting real-valued Ornstein--Uhlenbeck process,
\begin{align*}
	\begin{pmatrix}
		dX_{1, t} \\
		dX_{2, t} \\
	\end{pmatrix}
	=
	\begin{bmatrix}
		0 & 1 \\
		0 & \theta \\
	\end{bmatrix}
	\begin{pmatrix}
	 X_{1, t} \\
	 X_{2, t} \\
	\end{pmatrix}
	dt +
	\begin{pmatrix}
	0 \\ \sigma
	\end{pmatrix}
	dW_t.
\end{align*}
This can be written as
\begin{align*}
	dX_{1, t} &= X_{2, t}dt \\
	dX_{2, t} &= \theta X_{2, t}dt + \sigma dB_t.
\end{align*}
This is a linear Gaussian system and thus the solution can be derived analytically, see e.g. \textcite{christensen2012forecasting}. The discretized system at time $s > r$ can be simulated exactly by sampling
\begin{equation*}
	X_s \mid (X_r=x) \sim \mathcal{N}\left(e^{A(s-r)x}, e^{A(s-r)}Q(r,s) \left(e^{A(s-r)}\right)^T\right),
\end{equation*}
where 
\begin{equation*}
Q(r,s)=\int_{r}^{s}\exp(-At)hh^{T}\exp(-At)^{T}dt.
\end{equation*}
In order to implement the Ornstein--Uhlenbeck process we need the quantities
\begin{align*}
\exp(At) & =\sum_{k=0}^{\infty}t^{k}\frac{A^{k}}{k!}.
\end{align*}
Note that
\[
A^{2}=\left[\begin{array}{cc}
0 & 1\\
0 & \theta
\end{array}\right]\left[\begin{array}{cc}
0 & 1\\
0 & \theta
\end{array}\right]=\left[\begin{array}{cc}
0 & \theta\\
\text{0} & \theta^{2}
\end{array}\right]
\]
\begin{align*}
A^{k}=A\cdot A^{k-1} & =\left[\begin{array}{cc}
0 & 1\\
\text{0} & \theta
\end{array}\right]\left[\begin{array}{cc}
0 & \theta^{k-2}\\
\text{0} & \theta^{k-1}
\end{array}\right]\\
 & =\left[\begin{array}{cc}
0 & \theta^{k-1}\\
\text{0} & \theta^{k}
\end{array}\right]
\end{align*}
Hence, we can compute the matrix exponential analytically 
\begin{align*}
	\exp(At) &= \sum_{k=0}^\infty \frac{t^k}{k!} A^k \\
					 &= \sum_{k=0}^\infty \frac{t^k}{k!}
	\begin{bmatrix}
		0 & \theta^{k-1} \\
		0 & \theta^k \\
	\end{bmatrix}
	\\
	& = 
	\begin{bmatrix}
	1 & 0 \\
	0 & 1 
	\end{bmatrix}
	+ \sum_{k=1}^\infty \frac{t^k\theta^k}{k!}
	\begin{bmatrix}
		0 & 1/\theta \\
		0 & 1 \\
	\end{bmatrix} \\
	& = 
	\begin{bmatrix}
	1 & 0 \\
	0 & 1 
	\end{bmatrix}
	 + 
	\begin{bmatrix}
	0 & \exp(\theta t)/\theta - 1/\theta \\
	0 & \exp(\theta t) - 1\\
	\end{bmatrix}
	\\
	& = 
	\begin{bmatrix}
	1 & \exp(\theta t)/\theta - 1/\theta \\
	0 & \exp(\theta t)\\
	\end{bmatrix}.
\end{align*}
This gives us 
\begin{align*}
	\exp(-At)h & = 
	\begin{bmatrix}
	1 & \exp(-\theta t)/\theta - 1/\theta \\
	0 & \exp(-\theta t)
	\end{bmatrix} 
	\begin{bmatrix}
	0 \\
	\sigma 
	\end{bmatrix} \\
	& = 
	\begin{bmatrix}
	\frac{\sigma}{\theta} \exp(-\theta t) - \sigma/\theta \\
	\sigma \exp(-\theta t)
	\end{bmatrix}
\end{align*}
and
\begin{align*}
& \exp(-At)hh^T\exp(-At) \\
& =
	\begin{bmatrix}
	\frac{\sigma}{\theta} \exp(-\theta t) - \sigma/\theta \\
	\sigma \exp(-\theta t)
	\end{bmatrix}
	\begin{bmatrix}
	\frac{\sigma}{\theta} \exp(-\theta t) - \sigma/\theta & 
	\sigma \exp(-\theta t)
	\end{bmatrix} \\
& = 
	\begin{bmatrix}
		\frac{\sigma^2}{\theta^2}\left(\exp(-\theta t)-1 \right)^2 & \frac{\sigma^2}{\theta}\left(\exp(-2\theta t)-\exp(-\theta t)\right) \\
		\frac{\sigma^2}{\theta}\left(\exp(-2\theta t)-\exp(-\theta t) \right) & \sigma^2 \exp(-2\theta t)
	\end{bmatrix}.
\end{align*}
Thus 
\begin{align*}
Q(r,s) & =\int_{r}^{s}\exp(-At)hh^{T}\exp(-At)^{T}dt \\
			 & =
	\begin{bmatrix}
		\int_r^s\sigma^2\left(\exp(-\theta t)-1 \right)^2 dt & \int_r^s \frac{\sigma^2}{\theta}\left(\exp(-2\theta t)-\exp(-\theta t) \right)dt \\
		\int_r^s\sigma^2\left(\exp(-2\theta t)-\exp(-\theta t)/\theta \right)dt & \int_r^s \sigma^2 \exp(-2\theta t) dt
	\end{bmatrix} \\
	     & = 
	\begin{bmatrix}
		\frac{\sigma^2}{\theta^3}\left(-2\theta r + e^{-2\theta r}-4e^{-\theta r}-e^{-2\theta s}+4e^{-\theta s}+2\theta s \right)	& \frac{\sigma^2}{2\theta^2} \left(e^{-2r\theta} - e^{-2s\theta}\right) - \frac{\sigma^2}{\theta^2} \left(e^{-r\theta} - e^{-s\theta}\right) \\
\frac{\sigma^2}{2\theta^2} \left(e^{-2r\theta} - e^{-2s\theta}\right) - \frac{\sigma^2}{\theta^2} \left(e^{-r\theta} - e^{-s\theta}\right)  & \frac{\sigma^2}{2\theta} \left(e^{-2r\theta}- e^{-2s\theta}\right)
	\end{bmatrix}
\end{align*}
The covariance matrix for the system transition 
\begin{align*}
	\mathrm{Cov}(r,s) = e^{A(s-r)} Q(r, s) \left(e^{A(s-r)}\right)^T
\end{align*}
can thus be computed analytically.

\subsection{Simulation}

We simulate a non-homogeneous Poisson process in the interval $[0, 50]$ using the intensity function 
\begin{equation*}
	\lambda(s) = 2\exp(-s/15) + \exp\left(-((s - 25)/10)^2 \right),
\end{equation*}
see e.g. \textcite{adams2009tractable, li2018sequential}. We run the Bernoulli race particle filter with the above specification and $30$ particles to find the intensity function on the interval $[0, 50]$ which is divided into 10 equispaced intervals. The result of one run is shown in \autoref{fig:appendix_1}. We can see that the Bernoulli race particle filter is approximating the true intensity function.

\begin{figure}
\centering
\includegraphics[width=\textwidth]{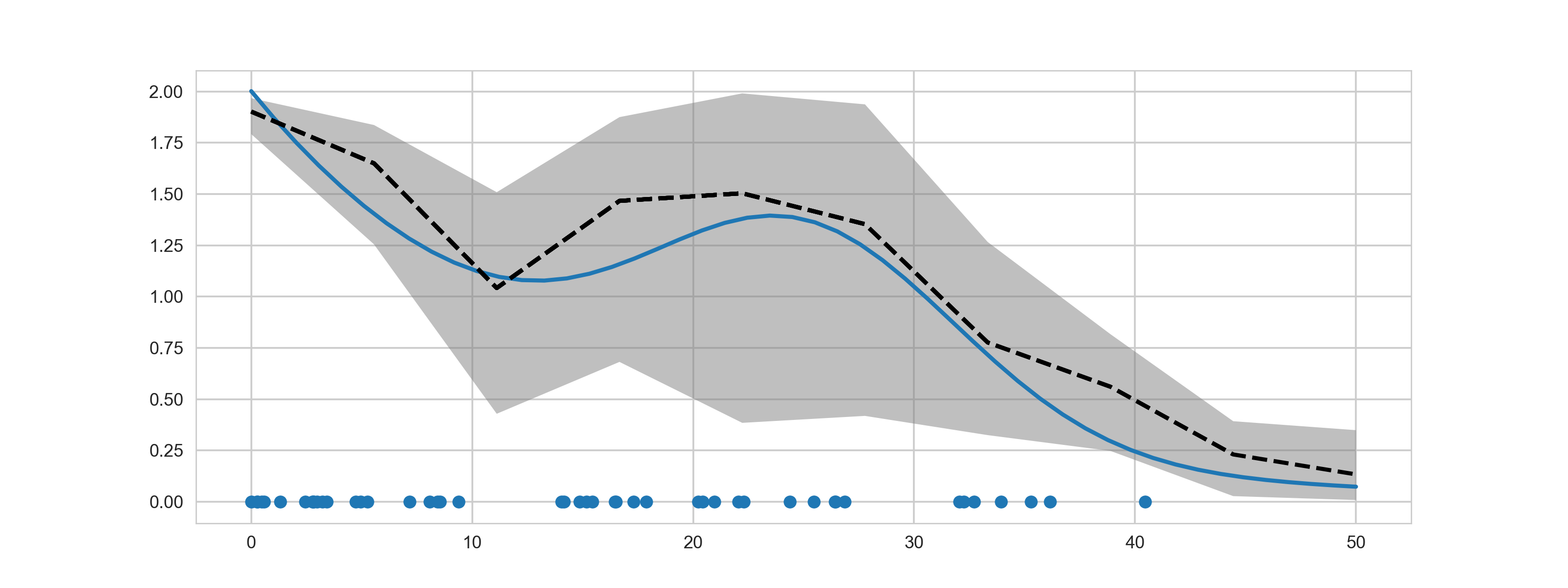}
\caption{\label{fig:appendix_1} Result of one run of the Bernoulli race particle filter using 30 particles. Arrival times of the Poisson process are shown as dots on the abscissa. The solid blue line shows the true intensity function and the dashed line shows the mean of the particle approximation at times $t_1, t_2, \ldots$. The shaded area highlights the 10\% and 90\% quantile of the particle approximation.}
\end{figure}

\printbibliography[heading=subbibliography] 
\end{refsection}

\end{document}